\newtheorem{theorem}{Theorem}
\newtheorem{lemma}{Lemma}
\newtheorem{definition}{Definition}
\newtheorem{proposition}{Proposition}
\begin{document}
\begin{CJK*}{GBK}{song}
\allowdisplaybreaks


\begin{center}
\LARGE\bf
A class of extended high-dimensional nonisospectral KdV hierarchies and symmetry
\end{center}
\footnotetext{\hspace*{-.45cm}\footnotesize $^\ast$Corresponding author: Y.F. Zhang. E-mail: zhangyfcumt@163.com}
\begin{center}
\ \\Haifeng Wang$^{1}$, Yufeng Zhang$^{2,\ast}$, Binlu Feng$^{2}$
\end{center}
\begin{center}
\begin{small} \sl
{$^{1}$School of Science, Jimei University, Xiamen, Fujian, 361021, China\\}
{$^2$School of Mathematics and Information Sciences, Weifang University, Weifang, 261061, China}
\end{small}
\end{center}

\vspace*{2mm}
\begin{center}
\begin{minipage}{15.5cm}
\parindent 20pt\footnotesize

 \noindent {\bfseries
Abstract}: We construct a new class of $N$-dimensional Lie algebras and apply them to integrable systems.
In this paper, we  obtain a nonisospectral KdV integrable hierarchy by introducing a nonisospectral spectral problem.
Then,  a coupled nonisospectral KdV hierarchy  is deduced by means of the corresponding higher-dimensional loop algebra. It follows that the $K$ symmetries, $\tau$ symmetries and their Lie algebra of the coupled nonisospectral KdV hierarchy are investigated.  The  bi-Hamiltonian structures of the both resulting hierarchies  are derived by using the trace identity.  Finally, we derive a multi-component nonisospectral KdV hierarchy related to the $N$-dimensional loop algebra, which   generalizes the coupled results to an arbitrary number of components.

\end{minipage}
\end{center}
\begin{center}
\begin{minipage}{15.5cm}
\begin{minipage}[t]{2.3cm}{\bf Key words:}\end{minipage}
\begin{minipage}[t]{13.1cm}
 Multi-component nonisospectral  KdV hierarchies;  High-dimensional Lie algebras; Bi-Hamiltonian structures;
 Symmetries
\end{minipage}\par\vglue8pt
{\bf MSC Numbers:} 37K05, 37K40, 35Q53 \vskip 0.5cm
\vskip 0.5cm
\end{minipage}
\end{center}
\def\cdot{{\scriptstyle\,\bullet\,}}
\section{Introduction}  
A most active field of recent research involved in applied mathematics and theoretical physics is concerned with nonlinear partial differential equation \cite{3}. The Korteweg-de Vries (KdV) equation \cite{220}, arises in the description of many phenomena in physics \cite{221,2}, is one of the most famous nonlinear partial differential equations. In recent years, many work has been done on the research of isospectral KdV hierarchies  \cite{222,223,224}.
However,  there are few results on the extended high-dimensional nonisospectral KdV hierarchies. Therefore, it is meaningful for us  to consider the generation and application of the extended high-dimensional nonisospectral KdV hierarchies in this paper.
The derivation of integrable hierarchies of evolution equations involves a variety of powerful methods, including the Lax pair method proposed by Magri \cite{1}, the method that Qiao and Ma came up with to generate isospectral and nonisospectral integrable hierarchies by applying the generalized Lax representations \cite{5,6,7,8}, the approach put forward by Tu  \cite{4} and later called Tu-scheme \cite{11}.  Some integrable systems and the corresponding Hamiltonian structures as well as other properties were obtained  by using the Tu scheme, such as  the works in \cite{32,22,23}. Based on it, many extended coupled integrable hierarchies were deduced by means of the knowledge of integrable coupling \cite{30,42,43,103}. The nonisospectral hierarchy is a special case of generalized structure of Lax representations  \cite{aaaa}.  In  \cite{bbbb}, the L-A-B representation of nonlinear evolution equations had been discussed and its range can be determined by an approach  proposed by Qiao and  Strampp. Then, the authors dealt with the category of nonlinear evolution equations in \cite{cccc}. Meanwhile, they proposed an approach for constructing the algebraic structure and $r$-matrix of  nonlinear evolution equations.
However,  the integrable systems generated by the Tu scheme were usually presented under the case of isospectral problems. Also,  to the best of our knowledge, there is very little work on generating multi-component integrable hierarchies because it is extremely dense. Recently, we  constructed a multi-component non-semisimple Lie algebra for generating  higher-dimensional isospectral and nonisospectral integrable hierarchies, and then derived  the $Z_N^{\varepsilon}$ isospectral MKdV integrable coupling hierarchy and the $Z_N^{\varepsilon}$ nonisospectral ANKS integrable coupling hierarchy \cite{206}.

Inspired by the corresponding research results related to the  Frobenius algebra \cite{201,202} and non-semisimple Lie algebra \cite{300,301,205}, we  construct a  class of   higher-dimensional Lie algebras   to generate multi-component  hierarchy of soliton equations. As an  application, we  consider the nonisospectral KdV   spatial spectral problem  under the assumption case where $\lambda_t=\sum\limits_{j=0}^{n}k_j(t)\lambda^{-j}$ \cite{14,21}. It follows that many isospectral and nonisospectral integrable systems can be obtained by reducing these resulting hierarchies.
Actually,  these   nonisospectral integrable systems that we obtained can enrich the existing integrable models and possibly describe new nonlinear phenomena \cite{16,17,18,19,20,26}.

It is known  that many integrable evolution equations possess a new set of symmetries, usually called $\tau$ symmetries, and these symmetries often constitute a Lie algebra together with the original symmetries, called $K$ symmetries \cite{15,207,208,209}. It is known that $\tau$  symmetries  are full of deep necessary \cite{210,211,213,214}. However, to the best of our knowledge,  there are   very few results on the symmetries of  coupled nonisospectral integrable hierarchies.  Therefore, it is   significant for us to investigate the $K$ symmetries, $\tau$ symmetries and their Lie algebra of the coupled nonisospectral KdV hierarchy.

\section{A few expanding higher-dimensional Lie algebras}
 The Lie algebra $A_1$ admits two basic subalgebras \cite{4}. One is that
\begin{equation}\label{1}
h=\left(\begin{matrix}
1&0\\
0&-1\end{matrix}
\right),\ \
e=\left(\begin{matrix}
0&1\\
0&0\end{matrix}
\right),\ \
f=\left(\begin{matrix}
0&0\\
1&0\end{matrix}
\right),
\end{equation}
 which satisfies the  commutative relations
\begin{equation}\notag
[h,e]=2e, \ \ [h,f]=-2f,  \ \ [e,f]=h.
\end{equation}
The other one is that
\begin{equation}\label{2}
\overline{h}=\frac{1}{2}\left(\begin{matrix}
1&0\\
0&-1\end{matrix}
\right),\ \
\overline{e}=\frac{1}{2}\left(\begin{matrix}
0&1\\
1&0\end{matrix}
\right),\ \
\overline{f}=\frac{1}{2}\left(\begin{matrix}
0&1\\
-1&0\end{matrix}
\right),
\end{equation}
 which is equipped with
\begin{equation}\notag
[\overline{h},\overline{e}]=\overline{f}, \ \ [\overline{h},\overline{f}]=\overline{e},  \ \ [\overline{e},\overline{f}]=\overline{h}.
\end{equation}
In \cite{41}, the authors presented several finite-dimensional Lie algebras. Now, we construct a few new higher-dimensional Lie algebras and generalize them to infinite dimensions. For the first subalgebra \eqref{1}, we introduce the extended Lie algebras as follows: \\
$\mathbf{Case\ 1}$
\begin{equation}\label{3}
 A_{12}=\rm span\{h_i\}_{i=1}^6,
 \end{equation}
 where
\begin{equation}\notag
\begin{split}
&h_1=\left(\begin{matrix}
h&0\\
0&h\end{matrix}
\right)=\left(\begin{matrix}
1&0&0&0\\
0&-1&0&0\\
0&0&1&0\\
0&0&0&-1
\end{matrix}
\right),\
h_2=\left(\begin{matrix}
e&0\\
0&e\end{matrix}
\right)=\left(\begin{matrix}
0&1&0&0\\
0&0&0&0\\
0&0&0&1\\
0&0&0&0
\end{matrix}
\right),\
h_3=\left(\begin{matrix}
f&0\\
0&f\end{matrix}
\right)=\left(\begin{matrix}
0&0&0&0\\
1&0&0&0\\
0&0&0&0\\
0&0&1&0
\end{matrix}
\right),\\
&h_4=\left(\begin{matrix}
0&\varepsilon h\\
h&0\end{matrix}
\right)=\left(\begin{matrix}
0&0&\varepsilon&0\\
0&0&0&-\varepsilon\\
1&0&0&0\\
0&-1&0&0
\end{matrix}
\right),\
h_5=\left(\begin{matrix}
0&\varepsilon e\\
e&0\end{matrix}
\right)=\left(\begin{matrix}
0&0&0&\varepsilon\\
0&0&0&0\\
0&1&0&0\\
0&0&0&0
\end{matrix}
\right),\
h_6=\left(\begin{matrix}
0&\varepsilon f\\
f&0\end{matrix}
\right)=\left(\begin{matrix}
0&0&0&0\\
0&0&\varepsilon&0\\
0&0&0&0\\
1&0&0&0
\end{matrix}
\right),
\end{split}
\end{equation}
\begin{equation}\notag
\begin{split}
&[h_1,h_2]=2h_2, \ \ [h_1,h_3]=-2h_3,  \ \ [h_1,h_4]=0, \ \ [h_1,h_5]=2h_5,\ \ [h_1,h_6]=-2h_6,   \\
& [h_2,h_3]=h_1,\ \ [h_2,h_4]=-2h_5,\ \ [h_2,h_5]=0, \ \ [h_2,h_6]=h_4,\ \ [h_3,h_4]=2h_6,  \ \ \\
&[h_3,h_5]=-h_4,\ \ [h_3,h_6]=0,\ \ [h_4,h_5]=2\varepsilon h_2, \ \ [h_4,h_6]=-2\varepsilon h_3, \ \ [h_5,h_6]=\varepsilon h_1,
\end{split}
\end{equation}
with $\varepsilon\in \mathbb{R}$.\\
Let $G_1$=$\rm span\{h_1, h_2, h_3\}$, $G_2$=$\rm span\{h_4, h_5, h_6\}$, then $A_{12}=G_1\bigoplus G_2$.  Denoting \begin{equation}\notag
[G_i,G_j]=\{[A,B]|A\in G_i,B\in G_j\},
\end{equation}
 we find that the  closure  properties  between $G_1$ and $G_2$ are as follows:
\begin{equation}\notag
[G_1,G_1]\subseteq G_1,\ \ [G_1,G_2]\subseteq G_2,\ \ [G_2,G_2]\subseteq G_1.
\end{equation}
$\mathbf{Case\ 2}$
\begin{equation}\label{4}
 A_{13}=\rm span\{\overline{h}_i\}_{i=1}^9,
 \end{equation}
 where
\begin{equation}\notag
\begin{split}
&\overline{h}_1=\left(\begin{matrix}
h&0&0\\
0&h&0\\
0&0&h
\end{matrix}
\right),\ \
\overline{h}_2=\left(\begin{matrix}
e&0&0\\
0&e&0\\
0&0&e
\end{matrix}
\right),\ \
\overline{h}_3=\left(\begin{matrix}
f&0&0\\
0&f&0\\
0&0&f
\end{matrix}
\right), \\
&\overline{h}_4=\left(\begin{matrix}
0&0&\varepsilon h\\
h&0&0\\
0&h&0
\end{matrix}
\right),\ \
\overline{h}_5=\left(\begin{matrix}
0&0&\varepsilon e\\
e&0&0\\
0&e&0
\end{matrix}
\right),\ \
\overline{h}_6=\left(\begin{matrix}
0&0&\varepsilon f\\
f&0&0\\
0&f&0
\end{matrix}
\right), \\
&\overline{h}_7=\left(\begin{matrix}
0&\varepsilon h&0\\
0&0&\varepsilon h\\
h&0&0
\end{matrix}
\right),\ \
\overline{h}_8=\left(\begin{matrix}
0&\varepsilon e&0\\
0&0&\varepsilon e\\
e&0&0
\end{matrix}
\right),\ \
\overline{h}_9=\left(\begin{matrix}
0&\varepsilon f&0\\
0&0&\varepsilon f\\
f&0&0
\end{matrix}
\right),\ \
\end{split}
\end{equation}
\begin{equation}\notag
\begin{split}
&[\overline{h}_1,\overline{h}_2]=2\overline{h}_2, \ \ [\overline{h}_1,\overline{h}_3]=-2\overline{h}_3,  \ \ [\overline{h}_1,\overline{h}_5]=2\overline{h}_5, \ \ [\overline{h}_1,\overline{h}_6]=-2\overline{h}_6,\ \ [\overline{h}_1,\overline{h}_8]=2\overline{h}_8,\ \    \\
&[\overline{h}_1,\overline{h}_9]=-2\overline{h}_9,\ \ [\overline{h}_2,\overline{h}_3]=\overline{h}_1, \ \ [\overline{h}_2,\overline{h}_4]=-2\overline{h}_5,  \ \ [\overline{h}_2,\overline{h}_6]=\overline{h}_4, \ \ [\overline{h}_2,\overline{h}_7]=-2\overline{h}_8,   \\
& [\overline{h}_2,\overline{h}_9]=\overline{h}_7,\ \ [\overline{h}_3,\overline{h}_4]=2\overline{h}_6,\ \ [\overline{h}_3,\overline{h}_5]=-\overline{h}_4, \ \ [\overline{h}_3,\overline{h}_7]=2\overline{h}_9,  \ \ [\overline{h}_3,\overline{h}_8]=-\overline{h}_7, \ \    \\
&[\overline{h}_4,\overline{h}_5]=2\overline{h}_8,\ \ [\overline{h}_4,\overline{h}_6]=-2\overline{h}_9,\ \ [\overline{h}_4,\overline{h}_8]=2\varepsilon\overline{h}_2,\ \ [\overline{h}_4,\overline{h}_9]=-2\varepsilon\overline{h}_3, \ \ [\overline{h}_5,\overline{h}_6]=\overline{h}_7, \\ &[\overline{h}_5,\overline{h}_7]=-2\varepsilon\overline{h}_2,\ \ [\overline{h}_5,\overline{h}_9]=\varepsilon\overline{h}_1,\ \ [\overline{h}_7,\overline{h}_8]=2\varepsilon\overline{h}_5,\ \ [\overline{h}_7,\overline{h}_9]=-2\varepsilon\overline{h}_6,\ \ [\overline{h}_8,\overline{h}_9]=\varepsilon\overline{h}_4,\\
&[\overline{h}_1,\overline{h}_4]=[\overline{h}_2,\overline{h}_5]=[\overline{h}_3,\overline{h}_6]=[\overline{h}_1,\overline{h}_7]
=[\overline{h}_2,\overline{h}_8]=[\overline{h}_3,\overline{h}_9]=[\overline{h}_4,\overline{h}_7]
=[\overline{h}_5,\overline{h}_8]=[\overline{h}_6,\overline{h}_9]=0.
\end{split}
\end{equation}
Let $\overline{G}_1$=$\rm span\{\overline{h}_1, \overline{h}_2, \overline{h}_3\}$, $\overline{G}_2$=$\rm span\{\overline{h}_4, \overline{h}_5, \overline{h}_6\}$, $\overline{G}_3$=$\rm span\{\overline{h}_7, \overline{h}_8, \overline{h}_9\}$, then $A_{13}=\overline{G}_1\bigoplus \overline{G}_2\bigoplus \overline{G}_3$.  It follows that one has
\begin{equation}\notag
\begin{split}
&[\overline{G}_1,\overline{G}_1]\subseteq \overline{G}_1,\ \ [\overline{G}_1,\overline{G}_2]\subseteq \overline{G}_2,\ \ [\overline{G}_1,\overline{G}_3]\subseteq \overline{G}_3,\\
&[\overline{G}_2,\overline{G}_2]\subseteq \overline{G}_3,\ \ [\overline{G}_2,\overline{G}_3]\subseteq \overline{G}_1,\ \ [\overline{G}_3,\overline{G}_3]\subseteq \overline{G}_2.
\end{split}
\end{equation}

Introducing a $N\times N$ square matrix of the following  form:
\begin{equation}\label{5}
M(A_1,A_2,\cdots,A_N)=\left[
\begin{matrix}
A_1&\varepsilon A_N&\varepsilon A_{N-1}&\cdots&\varepsilon A_4&\varepsilon A_3&\varepsilon A_2\\
A_2&A_1&\varepsilon A_{N}&\cdots&\varepsilon A_5&\varepsilon A_4&\varepsilon A_3\\
A_3&A_2&A_{1}&\cdots&\varepsilon A_6&\varepsilon A_5&\varepsilon A_4\\
\vdots&\vdots&\vdots&\ddots&\vdots&\vdots&\vdots\\
A_{N-2}&A_{N-3}&A_{N-4}&\cdots&A_{1}&\varepsilon A_N&\varepsilon A_{N-1}\\
A_{N-1}&A_{N-2}&A_{N-3}&\cdots&A_{2}&A_{1}&\varepsilon A_N\\
A_{N}&A_{N-1}&A_{N-2}&\cdots&A_{3}&A_{2}&A_{1}
\end{matrix}
\right]=:\left[
\begin{matrix}
A_1,A_2,\cdots,A_N
\end{matrix}
\right]^T,
\end{equation}
where $A_m \ (1\leq m\leq N)$  represent $N$ arbitrary square matrices of the same order. Here we use the vector $\left[
\begin{matrix}
A_1,A_2,\cdots,A_N
\end{matrix}
\right]^T$ to represent the corresponding $N\times N$  matrix for convenience.\\
$\mathbf{Case\ 3}$
\begin{equation}\label{6}
A_{1N}=\rm span\{\widetilde{h}_i\}_{i=1}^{3N},
\end{equation}
with
\begin{equation}\notag
\begin{split}
&\widetilde{h}_1=M(h,0,\cdots,0),\ \ \widetilde{h}_2=M(e,0,\cdots,0),\ \ \widetilde{h}_3=M(f,0,\cdots,0),\\
&\widetilde{h}_4=M(0,h,\cdots,0),\ \ \widetilde{h}_5=M(0,e,\cdots,0),\ \ \widetilde{h}_6=M(0,f,\cdots,0),\\
&\cdots\ \ \\
&\widetilde{h}_{3N-2}=M(0,0,\cdots,h),\ \ \widetilde{h}_{3N-1}=M(0,0,\cdots,e),\ \ \widetilde{h}_{3N}=M(0,0,\cdots,f),\ \ N=1,2,\cdots,
\end{split}
\end{equation}
\begin{equation}\notag
\begin{split}
&[\widetilde{h}_{3i-2},\widetilde{h}_{3k-2}]=[\widetilde{h}_{3i-1},\widetilde{h}_{3k-1}]=[\widetilde{h}_{3i},\widetilde{h}_{3k}]=0,\  \ \ \ i,k=1,2,\cdots,N,\\
&[\widetilde{h}_{3i-2},\widetilde{h}_{3k-1}]=\begin{cases}
  2\widetilde{h}_{3(k+i-1)-1}, \ \ \ 1\leq k\leq N-i+1,\\
 2\varepsilon\widetilde{h}_{3(k+i-1-N)-1}, \ \ \ N-i+2\leq k\leq N,
  \end{cases}\\
 & [\widetilde{h}_{3i-2},\widetilde{h}_{3k}]=\begin{cases}
  -2\widetilde{h}_{3(k+i-1)}, \ \ \ 1\leq k\leq N-i+1,\\
 -2\varepsilon\widetilde{h}_{3(k+i-1-N)}, \ \ \ N-i+2\leq k\leq N,
  \end{cases}\\
  &[\widetilde{h}_{3i-1},\widetilde{h}_{3k}]=\begin{cases}
  \widetilde{h}_{3(k+i-1)-2}, \ \ \ 1\leq k\leq N-i+1,\\
 \varepsilon\widetilde{h}_{3(k+i-1-N)-2}, \ \ \ N-i+2\leq k\leq N,
  \end{cases}
\end{split}
\end{equation}
where $M$ is  $N\times N$ matrix given by \eqref{5} and $h, e, f$ are the second-order matrices given by \eqref{1}.\\
Let $\widetilde{G}_k$=$\rm span\{\widetilde{h}_{3k-2}, \widetilde{h}_{3k-1}, \widetilde{h}_{3k}\}$, $k=1,2,\cdots N$, then $A_{1N}=\widetilde{G}_1\bigoplus \widetilde{G}_2\bigoplus\cdots\bigoplus \widetilde{G}_N$.  Thus, we obtain
\begin{equation}\notag
\begin{split}
&[\widetilde{G}_i,\widetilde{G}_j]\subseteq \widetilde{G}_{i+j-1-\delta N},\ \   \delta=\begin{cases}
  0, \ \ \ 2\leq i+j\leq N+1,\\
  1, \ \ \ N+2\leq i+j\leq 2N,
  \end{cases}\ \  i,j=1,2,\cdots,N.
\end{split}
\end{equation}

Similarly,   we introduce the following  extended Lie algebras related to the subalgebra \eqref{2}:\\
$\mathbf{Case\ 4}$
\begin{equation}\label{7}
 A_{22}=\rm span\{\overline{e}_i\}_{i=1}^6,
 \end{equation}
 where
\begin{equation}\notag
\begin{split}
&\overline{e}_1=\left(\begin{matrix}
\overline{h}&0\\
0&\overline{h}\end{matrix}
\right)=\frac{1}{2}\left(\begin{matrix}
1&0&0&0\\
0&-1&0&0\\
0&0&1&0\\
0&0&0&-1
\end{matrix}
\right),\ \
\overline{e}_2=\left(\begin{matrix}
\overline{e}&0\\
0&\overline{e}\end{matrix}
\right)=\frac{1}{2}\left(\begin{matrix}
0&1&0&0\\
1&0&0&0\\
0&0&0&1\\
0&0&1&0
\end{matrix}
\right),\\
&\overline{e}_3=\left(\begin{matrix}
\overline{f}&0\\
0&\overline{f}\end{matrix}
\right)=\frac{1}{2}\left(\begin{matrix}
0&1&0&0\\
-1&0&0&0\\
0&0&0&1\\
0&0&-1&0
\end{matrix}
\right),\ \
\overline{e}_4=\left(\begin{matrix}
0&\varepsilon \overline{h}\\
\overline{h}&0\end{matrix}
\right)=\frac{1}{2}\left(\begin{matrix}
0&0&\varepsilon&0\\
0&0&0&-\varepsilon\\
1&0&0&0\\
0&-1&0&0
\end{matrix}
\right),\\
&\overline{e}_5=\left(\begin{matrix}
0&\varepsilon \overline{e}\\
\overline{e}&0\end{matrix}
\right)=\frac{1}{2}\left(\begin{matrix}
0&0&0&\varepsilon\\
0&0&\varepsilon&0\\
0&1&0&0\\
1&0&0&0
\end{matrix}
\right),\ \
\overline{e}_6=\left(\begin{matrix}
0&\varepsilon \overline{f}\\
\overline{f}&0\end{matrix}
\right)=\frac{1}{2}\left(\begin{matrix}
0&0&0&\varepsilon\\
0&0&-\varepsilon&0\\
0&1&0&0\\
-1&0&0&0
\end{matrix}
\right),
\end{split}
\end{equation}
\begin{equation}\notag
\begin{split}
&[\overline{e}_1,\overline{e}_2]=\overline{e}_3, \ \ [\overline{e}_1,\overline{e}_3]=\overline{e}_2,  \ \ [\overline{e}_1,\overline{e}_4]=0, \ \ [\overline{e}_1,\overline{e}_5]=\overline{e}_6,\ \ [\overline{e}_1,\overline{e}_6]=\overline{e}_5,   \\
& [\overline{e}_2,\overline{e}_3]=-\overline{e}_1,\ \ [\overline{e}_2,\overline{e}_4]=-\overline{e}_6,\ \ [\overline{e}_2,\overline{e}_5]=0, \ \ [\overline{e}_2,\overline{e}_6]=-\overline{e}_4,\ \ [\overline{e}_3,\overline{e}_4]=-\overline{e}_5,  \ \ \\
&[\overline{e}_3,\overline{e}_5]=\overline{e}_4,\ \ [\overline{e}_3,\overline{e}_6]=0,\ \ [\overline{e}_4,\overline{e}_5]=\varepsilon \overline{e}_3, \ \ [\overline{e}_4,\overline{e}_6]=\varepsilon \overline{e}_2, \ \ [\overline{e}_5,\overline{e}_6]=\varepsilon \overline{e}_1.
\end{split}
\end{equation}
Let $\overline{g}_1$=$\rm span\{\overline{e}_1, \overline{e}_2, \overline{e}_3\}$, $\overline{g}_2$=$\rm span\{\overline{e}_4, \overline{e}_5, \overline{e}_6\}$, then $A_{22}=\overline{g}_1\bigoplus \overline{g}_2$.
It follows that one has
\begin{equation}\notag
[\overline{g}_1,\overline{g}_1]\subseteq \overline{g}_1,\ \ [\overline{g}_1,\overline{g}_2]\subseteq \overline{g}_2,\ \ [\overline{g}_2,\overline{g}_2]\subseteq \overline{g}_1.
\end{equation}
$\mathbf{Case\ 5}$
\begin{equation}\label{8}
A_{2N}=\rm span\{\widetilde{e}_i\}_{i=1}^{3N},
\end{equation}
with
\begin{equation}\notag
\begin{split}
&\widetilde{e}_1=M(\overline{h},0,\cdots,0),\ \ \widetilde{e}_2=M(\overline{e},0,\cdots,0),\ \ \widetilde{e}_3=M(\overline{f},0,\cdots,0),\\
&\widetilde{e}_4=M(0,\overline{h},\cdots,0),\ \ \widetilde{e}_5=M(0,\overline{e},\cdots,0),\ \ \widetilde{e}_6=M(0,\overline{f},\cdots,0),\\
&\cdots\ \ \\
&\widetilde{e}_{3N-2}=M(0,0,\cdots,\overline{h}),\ \ \widetilde{e}_{3N-1}=M(0,0,\cdots,\overline{e}),\ \ \widetilde{e}_{3N}=M(0,0,\cdots,\overline{f}),\ \ N=1,2,\cdots,
\end{split}
\end{equation}
\begin{equation}\notag
\begin{split}
&[\widetilde{e}_{3i-2},\widetilde{e}_{3k-2}]=[\widetilde{e}_{3i-1},\widetilde{e}_{3k-1}]=[\widetilde{e}_{3i},\widetilde{e}_{3k}]=0,\ \ \ \ i,k=1,2,\cdots,N,\\
&[\widetilde{e}_{3i-2},\widetilde{e}_{3k-1}]=\begin{cases}
  \widetilde{e}_{3(k+i-1)}, \ \ \ 1\leq k\leq N-i+1,\\
 \varepsilon\widetilde{e}_{3(k+i-1-N)}, \ \ \ N-i+2\leq k\leq N,
  \end{cases}\\
 & [\widetilde{e}_{3i-2},\widetilde{e}_{3k}]=\begin{cases}
  \widetilde{e}_{3(k+i-1)-1}, \ \ \ 1\leq k\leq N-i+1,\\
 \varepsilon\widetilde{e}_{3(k+i-1-N)-1}, \ \ \ N-i+2\leq k\leq N,
  \end{cases}\\
  &[\widetilde{e}_{3i-1},\widetilde{e}_{3k}]=\begin{cases}
  -\widetilde{e}_{3(k+i-1)-2}, \ \ \ 1\leq k\leq N-i+1,\\
 -\varepsilon\widetilde{h}_{3(k+i-1-N)-2}, \ \ \ N-i+2\leq k\leq N,
  \end{cases}
\end{split}
\end{equation}
where $M$ is  $N\times N$ matrix given by \eqref{5} and $\overline{h}, \overline{e}, \overline{f}$ are  given by \eqref{2}.\\
Let $\widetilde{g}_k$=$\rm span\{\widetilde{e}_{3k-2}, \widetilde{e}_{3k-1}, \widetilde{e}_{3k}\}$, $k=1,2,\cdots N$, then $A_{2N}=\widetilde{g}_1\bigoplus \widetilde{g}_2\bigoplus\cdots\bigoplus \widetilde{g}_N$.  Thus, we obtain
\begin{equation}\notag
\begin{split}
&[\widetilde{g}_i,\widetilde{g}_j]\subseteq \widetilde{g}_{i+j-1-\delta N},\ \   \delta=\begin{cases}
  0, \ \ \ 2\leq i+j\leq N+1,\\
  1, \ \ \ N+2\leq i+j\leq 2N,
  \end{cases}\ \  i,j=1,2,\cdots,N.
\end{split}
\end{equation}

We believe that other Lie algebras could also be extended to
the similar higher-dimensional forms. For example, the   Lie algebra $\rm so(3)$ admits a set of bases $f_1, f_2, f_3$,
 \begin{equation}\label{9}
 f_1=\left(\begin{matrix}
0&0&1\\
0&0&0\\
-1&0&0
\end{matrix}
\right),\ \
 f_2=\left(\begin{matrix}
0&0&0\\
0&0&-1\\
0&1&0
\end{matrix}
\right),\ \
 f_3=\left(\begin{matrix}
0&1&0\\
-1&0&0\\
0&0&0
\end{matrix}
\right),\ \
 \end{equation}
 whose commutators are
\begin{equation}\notag
[f_1,f_2]=f_3, \ \ [f_1,f_3]=-f_2,  \ \ [f_2,f_3]=f_1.
\end{equation}
The Lie algebra $\rm so(3)$ can be extended into:\\
$\mathbf{Case\ 6}$
\begin{equation}\label{10}
 A_{32}=\rm span\{\overline{f}_i\}_{i=1}^6,
 \end{equation}
 where
\begin{equation}\notag
\begin{split}
&\overline{f}_1=\left(\begin{matrix}
f_1&0\\
0&f_1\end{matrix}
\right),\ \
\overline{f}_2=\left(\begin{matrix}
f_2&0\\
0&f_2\end{matrix}
\right),\ \
\overline{f}_3=\left(\begin{matrix}
f_3&0\\
0&f_3\end{matrix}
\right),\\
&\overline{f}_4=\left(\begin{matrix}
0&\varepsilon f_1\\
f_1&0\end{matrix}
\right),\ \
\overline{f}_5=\left(\begin{matrix}
0&\varepsilon f_2\\
f_2&0\end{matrix}
\right),\ \ \overline{f}_6=\left(\begin{matrix}
0&\varepsilon f_3\\
f_3&0\end{matrix}
\right),
\end{split}
\end{equation}
\begin{equation}\notag
\begin{split}
&[\overline{f}_1,\overline{f}_2]=\overline{f}_3, \ \ [\overline{f}_1,\overline{f}_3]=-\overline{f}_2,  \ \ [\overline{f}_1,\overline{f}_4]=0, \ \ [\overline{f}_1,\overline{f}_5]=\overline{f}_6,\ \ [\overline{f}_1,\overline{f}_6]=-\overline{f}_5,   \\
& [\overline{f}_2,\overline{f}_3]=\overline{f}_1,\ \ [\overline{f}_2,\overline{f}_4]=-\overline{f}_6,\ \ [\overline{f}_2,\overline{f}_5]=0, \ \ [\overline{f}_2,\overline{f}_6]=\overline{f}_4,\ \ [\overline{f}_3,\overline{f}_4]=\overline{f}_5,  \ \ \\
&[\overline{f}_3,\overline{f}_5]=-\overline{f}_4,\ \ [\overline{f}_3,\overline{f}_6]=0,\ \ [\overline{f}_4,\overline{f}_5]=\varepsilon \overline{f}_3, \ \ [\overline{f}_4,\overline{f}_6]=-\varepsilon \overline{f}_2, \ \ [\overline{f}_5,\overline{f}_6]=\varepsilon \overline{f}_1.
\end{split}
\end{equation}
$\mathbf{Case\ 7}$
\begin{equation}\label{11}
A_{3N}=\rm span\{\widetilde{f}_i\}_{i=1}^{3N},
\end{equation}
with
\begin{equation}\notag
\begin{split}
&\widetilde{f}_1=M(f_1,0,\cdots,0),\ \ \widetilde{f}_2=M(f_2,0,\cdots,0),\ \ \widetilde{f}_3=M(f_3,0,\cdots,0),\\
&\widetilde{f}_4=M(0,f_1,\cdots,0),\ \ \widetilde{f}_5=M(0,f_2,\cdots,0),\ \ \widetilde{f}_6=M(0,f_3,\cdots,0),\\
&\cdots\ \ \\
&\widetilde{f}_{3N-2}=M(0,0,\cdots,f_1),\ \ \widetilde{f}_{3N-1}=M(0,0,\cdots,f_2),\ \ \widetilde{f}_{3N}=M(0,0,\cdots,f_3),\ \ N=1,2,\cdots,
\end{split}
\end{equation}
where $M$ is  given by \eqref{5} and $f_1, f_2, f_3$ are  given by \eqref{9}.

These expanded higher-dimensional Lie algebras can be applied to different spectral problems, and then generate multi-component integrable hierarchies. Below we consider a specific application, that is the KdV spectral problem.

\section{A nonisospectral KdV hierarchy}

For the first set of basis  \eqref{1}, we consider the corresponding loop algebra
\begin{equation}\notag
\widetilde{A}_{11}=\rm span\{h(n), e(n), f(n)\}
\end{equation}
which satisfies the commutators
\begin{equation}\notag
[h(n),e(m)]=2e(m+n),\ [h(n),f(m)]=-2f(m+n),\ [e(n),f(m)]=h(m+n), \ m,n\in \mathbb{Z},
\end{equation}
where
\begin{equation}\notag
h(n)=h\lambda^{n}, \ e(n)=e\lambda^{n}, \ f(n)=f\lambda^{n}.
\end{equation}
Based on the     KdV spatial  spectral problem  (see \cite{2}), we introduce the following  nonisospectral problem based on $\widetilde{A}_{11}$
\begin{equation}\label{12}
\begin{cases}
\psi_x=M\psi,\ \ M=\frac{1}{4}e(1)-ue(0)+f(0)=\left(\begin{matrix}
0&-u+\frac{\lambda}{4}\\
1&0
\end{matrix}
\right), \\
\psi_t=N\psi,\ \ N=ah(0)+be(0)+cf(0)=\left(\begin{matrix}
a&b\\
c&-a\end{matrix}
\right),\\
\lambda_t=\sum\limits_{m\geq0}k_m(t)\lambda^{-m},
\end{cases}
\end{equation}
where  $a=\sum\limits_{m\geq0}a_m\lambda^{-m}$, $b=\sum\limits_{m\geq0}b_m\lambda^{-m}$, $c=\sum\limits_{m\geq0}c_m\lambda^{-m}$.\\
A direct calculation  gives
$\frac{\partial M}{\partial\lambda}\lambda_t=\frac{1}{4}\sum\limits_{m\geq0}k_m(t)e(-m).$
 By  solving the following nonisospectral stationary zero-curvature equation of \eqref{12}
\begin{equation}\label{13}
N_x=\frac{\partial M}{\partial \lambda}\lambda_t+[M,N],
\end{equation}
 we obtain the  recursion equations:
\begin{equation}\label{14}
\begin{cases}
a_{mx}=-uc_{m}+\frac{1}{4}c_{m+1}-b_m,\\
b_{mx}=\frac{1}{4}k_m(t)-\frac{1}{2}a_{m+1}+2ua_{m},\\
c_{mx}=2a_m,
\end{cases}
\end{equation}
which have the  equivalent forms
\begin{equation}\label{15}
\begin{cases}
a_{m}=\frac{1}{2}c_{mx},\\
b_m=-\frac{1}{4}c_{m+1}+\partial^{-1}uc_{mx}+\frac{1}{4}k_m(t)x,\\
c_{m+1}=(\partial^2+2u+2\partial^{-1}u\partial)c_m+\frac{1}{2}k_m(t)x.
\end{cases}
\end{equation}
To the  recursion equations \eqref{15}, we take the initial values $ a_0=0,$ then
\begin{equation}\label{15*}
\begin{split}
& a_0=0,\ \ c_0=\alpha,\ \ c_1=2\alpha u+\frac{1}{2}k_0(t)x,\ \ b_0=-\frac{1}{2}\alpha u+\frac{1}{8}k_0(t)x,\\
&a_1=\alpha u_x+\frac{1}{4}k_0(t),\ \ c_2=2\alpha u_{xx}+6\alpha u^2+k_0(t)(xu+\partial^{-1}u)+\frac{1}{2}k_1(t)x,\\ &b_1=-\frac{\alpha}{2}u_{xx}-\frac{\alpha}{2}u^2-\frac{1}{4}k_0(t)(xu+\partial^{-1}u)-\frac{1}{8}k_1(t)x+\frac{1}{4}k_2(t)x,\\
&\cdots
\end{split}
\end{equation}

Denoting that
\begin{equation}\notag
N^{(n)}=N_+^{(n)}+N_-^{(n)}=N\lambda^n,\ \
N_+^{(n)}=\sum\limits_{i=0}^n(a_ih(n-i)+b_ie(n-i)+c_if(n-i)),
\end{equation}
\begin{equation}\notag
\deg h(n)=\deg(h\lambda^{n})=n,\ \
\lambda_{t}^{(n)}=\lambda_{t,+}^{(n)}+\lambda_{t,-}^{(n)}=\sum\limits_{i=0}^nk_i(t)\lambda^{n-i}+\sum\limits_{i=n}^\infty k_i(t)\lambda^{n-i}.
\end{equation}
\eqref{13} can be broken down into
\begin{equation}\label{16}
-N_{+,x}^{(n)}+\frac{\partial M}{\partial \lambda}\lambda_{t,+}^{(n)}+[M,N_+^{(n)}]=N_{-,x}^{(n)}-\frac{\partial M}{\partial \lambda}\lambda_{t,-}^{(n)}-[M,N_-^{(n)}].
\end{equation}
 It follows that the gradations of the left-hand side of \eqref{16} are obtained as follows:
\begin{equation}\notag
\deg N_+^{(n)}=:(0,0,0)\geq 0, \ \ \deg \frac{\partial M}{\partial \lambda}\lambda_{t,+}^{(n)}=:(0,0,0)\geq 0,\ \
\deg ([M,N_+^{(n)}])=:(0,1,0;0,0,0)\geq 0,
\end{equation}
which indicate the minimum gradation of the left-hand side of \eqref{16} is zero. Similarly,    the maximum gradation of the right-hand side of \eqref{16} is also 0.
Thus, we  obtain the following equation by taking these terms which have the gradations  0:
\begin{equation}\label{17}
\begin{split}
-N_{+,x}^{(n)}+\frac{\partial M}{\partial \lambda}\lambda_{t,+}^{(n)}+[M,N_+^{(n)}]=\frac{1}{2}a_{n+1}e(0)-\frac{1}{4}c_{n+1}h(0).
\end{split}
\end{equation}
Thus, we take the modified term  $\triangle_n=-\frac{1}{4}c_{n+1}e(0)$ so that for
 $N^{(n)}=N_+^{(n)}+\triangle_n$.
By solving the nonisospectral zero curvature equation
\begin{equation}\label{18}
\frac{\partial M}{\partial u}u_t+\frac{\partial M}{\partial \lambda}\lambda_{t}^{(n)}-N_x^{(n)}+[M,N^{(n)}]=0,
\end{equation}
we obtain the  nonisospectral KdV hierarchy  as follows:
\begin{equation}\label{19}
u_{t_n}=\frac{1}{2}c_{n+1,x}=\frac{1}{2}[(\partial^2+2u+2\partial^{-1}u\partial)c_n+\frac{1}{2}k_n(t)x]_x=:\frac{1}{2}\partial Lc_n+\frac{1}{4}k_n(t),
\end{equation}
where $L=\partial^2+2u+2\partial^{-1}u\partial.$
The first two   equations in the above  hierarchy of soliton equations are
\begin{equation}\label{21}
u_{t_0}=\alpha u_x+\frac{1}{4}k_0(t),
\end{equation}
\begin{equation}\label{22}
u_{t_1}=\alpha (u_{xxx}+6uu_x)+\frac{1}{2}k_0(t)(xu_x+2u)+\frac{1}{4}k_1(t).
\end{equation}
 The equation \eqref{22} becomes the famous KdV equation when $\alpha=1,$ $k_0(t)=k_1(t)=0$.   Actually, the nonisospectral KdV hierarchy \eqref{19} can be reduced to the isospectral  KdV hierarchy by taking  $k_m(t)=0$, $m=0,1,2,\cdots.$

\subsection{Bi-Hamiltonian structures}
In this section, we discuss the Hamiltonian structure of the hierarchy \eqref{19} by means of the trace identity (see \cite{4}). Denoting the trace of the square matrices A and B by $<A,B>=tr(AB)$.
From the spectral problem \eqref{12}, we can directly calculate
\begin{equation}\notag
\begin{split}
&<N,\frac{\partial M}{\partial u}>=-2c,\ \ <N,\frac{\partial M}{\partial \lambda}>=\frac{1}{4}c.
\end{split}
\end{equation}
It follows that the   trace identity
\begin{equation}\notag
\frac{\delta}{\delta u}(<N,\frac{\partial M}{\partial \lambda}>)=\lambda^{-\gamma}\frac{\partial}{\partial\lambda}\lambda^{\gamma}
(<N,\frac{\partial M}{\partial u}>)
\end{equation}
admits that
\begin{equation}\notag
\frac{\delta}{\delta u}(\frac{1}{4}\sum\limits_{m\geq0}c_m\lambda^{-m})=\lambda^{-\gamma}\frac{\partial}{\partial\lambda}\lambda^{\gamma}
(-\sum\limits_{m\geq0}c_m\lambda^{-m}),
\end{equation}
and thus we have
\begin{equation}\notag
\frac{\delta c_m}{\delta u}=-4(\gamma-m+1)c_{m-1},\ \ m\geq1.
\end{equation}
One can find that $\gamma=-\frac{1}{2}$ via substituting $m=1$ into the above equation by means of \eqref{15*}. Therefore, we obtain
\begin{equation}\label{23}
c_m=\frac{\delta H_{m+1}}{\delta u},\ \ \ H_{m+1}=\frac{c_{m+1}}{2(2m+1)},\ \ m\geq0.
\end{equation}
Furthermore, we obtain the bi-Hamiltonian structures of the hierarchy \eqref{19} as follows:

\begin{equation}\label{24}
\begin{split}
u_{t_n}=&K_n=\frac{1}{2}\partial c_{n+1}=:Jc_{n+1}=J\frac{\delta H_{n+2}}{\delta u}\\
=&J(L\frac{\delta H_{n+1}}{\delta u}+\frac{1}{2}k_n(t)x)=:M\frac{\delta H_{n+1}}{\delta u}+\frac{1}{4}k_n(t)\\
=&J(L^{n+1}c_0+\frac{1}{2}\sum\limits_{m=0}^{n}k_m(t)L^{n-m}x)=\Phi^{n+1}Jc_0+\frac{1}{2}\sum\limits_{m=0}^{n}k_m(t)\Phi^{n-m}Jx,
\end{split}
\end{equation}
where $J$ and $M$ are two Hamiltonian operators and $\Phi$ is a hereditary symmetry operator  as follows:
\begin{equation}\label{25}
J=\frac{1}{2}\partial,\ \ M=JL=\frac{1}{2}(\partial^3+2\partial u+2u\partial),\ \ \Phi=JLJ^{-1}=\partial^2+2u_x\partial^{-1}+4u.
\end{equation}

We can conclude that the integrable hierarchy \eqref{19} is  integrable in the sense of Liouville (see \cite{200}),  and  we  obtain the Abelian algebra of symmetries
\begin{equation}\label{26}
[K_i,K_j]=K_i'(u)[K_j]-K_j'(u)[K_i]=0,\ \  i,j\geq0,
\end{equation}
and the  Abelian algebras of conserved functionals,
\begin{equation}\label{27}
\{H_{i},H_{j}\}_{J}=\int(\frac{\delta H_{i}}{\delta u})^TJ\frac{\delta H_{j}}{\delta u}dx=0,\ \ \{H_{i},H_{j}\}_{M}=\int(\frac{\delta H_{i}}{\delta u})^TJ\frac{\delta H_{j}}{\delta u}dx=0,\ \  i,j\geq0.
\end{equation}

\section{A coupled nonisospectral KdV hierarchy}

For the  Lie algebra $A_{12}$ (see eq.\eqref{3}), we consider the corresponding loop algebra
\begin{equation}\notag
\widetilde{A}_{12}=\rm span\{h_1(n), h_2(n), h_3(n), h_4(n), h_5(n), h_6(n)\}
\end{equation}
where
\begin{equation}\notag
h_i(n)=h_i\lambda^{n}, \ i=1,2,\cdots,6.
\end{equation}
Introducing the following  nonisospectral problem based on $\widetilde{A}_{11}$
\begin{equation}\label{b12}
\begin{cases}
\psi_x=U\psi,\ \ U=\frac{1}{4}h_2(1)-u_1h_2(0)+h_3(0)-u_2h_5(0)=\left[\begin{matrix}
U_1&\varepsilon U_2\\
U_2&U_1
\end{matrix}
\right], \\
\psi_t=W\psi,\ \ W=ah_1(0)+bh_2(0)+ch_3(0)+eh_4(0)+fh_5(0)+gh_6(0)=\left[\begin{matrix}
W_1&\varepsilon W_2\\
W_2&W_1
\end{matrix}
\right],\\
\lambda_t=\sum\limits_{m\geq0}k_m(t)\lambda^{-m},
\end{cases}
\end{equation}
where
 \begin{equation}\notag
 \begin{split}
 &U_1=\left(\begin{matrix}
0&-u_1+\frac{\lambda}{4}\\
1&0
\end{matrix}
\right),\ \ U_2=\left(\begin{matrix}
0&-u_2\\
0&0
\end{matrix}
\right),\ \ W_1=\left(\begin{matrix}
a&b\\
c&-a\end{matrix}
\right),\ \ W_2=\left(\begin{matrix}
e&f\\
g&-e\end{matrix}
\right)\\
 &a=\sum\limits_{m\geq0}a_m\lambda^{-m}, b=\sum\limits_{m\geq0}b_m\lambda^{-m}, c=\sum\limits_{m\geq0}c_m\lambda^{-m}, e=\sum\limits_{m\geq0}e_m\lambda^{-m}, f=\sum\limits_{m\geq0}f_m\lambda^{-m},  g=\sum\limits_{m\geq0}g_m\lambda^{-m}.
  \end{split}
 \end{equation}
The   stationary zero-curvature equation
\begin{equation}\label{b13}
W_x=\frac{\partial U}{\partial \lambda}\lambda_t+[U,W],
\end{equation}
gives rise to the  recursion equations:
\begin{equation}\label{b14}
\begin{cases}
a_{mx}=-u_1c_{m}+\frac{1}{4}c_{m+1}-b_m-\varepsilon u_2g_{m},\\
b_{mx}=\frac{1}{4}k_m(t)-\frac{1}{2}a_{m+1}+2u_1a_{m}+2\varepsilon u_2e_{m},\\
c_{mx}=2a_m,\\
e_{mx}=-u_1g_{m}+\frac{1}{4}g_{m+1}-f_m-u_2c_{m},\\
f_{mx}=-\frac{1}{2}e_{m+1}+2u_1e_{m}+2u_2a_{m},\\
g_{mx}=2e_m,
\end{cases}
\end{equation}
which have the  equivalent forms
\begin{equation}\label{b15}
\begin{cases}
a_{m}=\frac{1}{2}c_{mx},\\
b_m=-\frac{1}{4}c_{m+1}+\partial^{-1}(u_1c_{mx}+\varepsilon u_2g_{mx})+\frac{1}{4}k_m(t)x,\\
c_{m+1}=(\partial^2+2u_1+2\partial^{-1}u_1\partial)c_m+(2\varepsilon u_2+2\varepsilon \partial^{-1}u_2\partial)g_m+\frac{1}{2}k_m(t)x,\\
e_{m}=\frac{1}{2}g_{mx},\\
f_m=-\frac{1}{4}g_{m+1}+\partial^{-1}(u_1g_{mx}+u_2c_{mx}),\\
g_{m+1}=(2u_2+2\partial^{-1}u_2\partial)c_m+(\partial^2+2u_1+2\partial^{-1}u_1\partial)g_m.
\end{cases}
\end{equation}
To the  recursion equations \eqref{b15}, we take the initial values $ a_0=e_0=0,$
it follows that one has
\begin{equation}\label{b15*}
\begin{split}
& a_0=e_0=0,\ \ c_0=\alpha_1,\ \ g_0=\alpha_2,\ \ c_1=2\alpha_1 u_1+2\varepsilon\alpha_2 u_2+\frac{1}{2}k_0(t)x,\ \ g_1=2\alpha_2 u_1+2\alpha_1 u_2,\\
&b_0=-\frac{1}{2}\alpha_1 u_1-\frac{\varepsilon}{2}\alpha_2 u_2+\frac{1}{8}k_0(t)x,\ \
f_0=-\frac{1}{2}\alpha_2 u_1-\frac{1}{2}\alpha_1 u_2,\\
&a_1=\alpha_1 u_{1x}+\varepsilon\alpha_2 u_{2x}+\frac{1}{4}k_0(t),\ \
e_1=\alpha_2 u_{1x}+\alpha_1 u_{2x},\\
 &c_2=2\alpha_1 u_{1xx}+2\varepsilon\alpha_2 u_{2xx}
+6\alpha_1 u_1^2+6\varepsilon\alpha_1 u_2^2+12\varepsilon\alpha_2 u_1u_2+k_0(t)xu_1+k_0(t)\partial^{-1}u_1+\frac{1}{2}k_1(t)x,\\
&g_2=2\alpha_2 u_{1xx}+2\alpha_1 u_{2xx}+6\alpha_2 u_1^2+6\varepsilon\alpha_2 u_2^2+12\alpha_1 u_1u_2+k_0(t)xu_2+k_0(t)\partial^{-1}u_2,\\
&\cdots
\end{split}
\end{equation}

Denoting that
\begin{equation}\notag
W^{(n)}=W\lambda^n,\ \
W_+^{(n)}=\sum\limits_{i=0}^n(a_ih_1(n-i)+b_ih_2(n-i)+c_ih_3(n-i)+e_ih_4(n-i)+f_ih_5(n-i)+g_ih_6(n-i)),
\end{equation}
\eqref{b13} can be broken down into
\begin{equation}\label{b16}
-W_{+,x}^{(n)}+\frac{\partial U}{\partial \lambda}\lambda_{t,+}^{(n)}+[U,W_+^{(n)}]=W_{-,x}^{(n)}-\frac{\partial U}{\partial \lambda}\lambda_{t,-}^{(n)}-[U,W_-^{(n)}].
\end{equation}
The minimum gradation of the left-hand side and the maximum gradation of the right-hand side of \eqref{b16} are both zero.
It follows that one has:
\begin{equation}\label{b17}
\begin{split}
-W_{+,x}^{(n)}+\frac{\partial U}{\partial \lambda}\lambda_{t,+}^{(n)}+[U,W_+^{(n)}]=\frac{1}{2}a_{n+1}e(0)-\frac{1}{4}c_{n+1}h(0).
\end{split}
\end{equation}
Thus, we take the modified term  $\triangle_n=-\frac{1}{4}c_{n+1}h_2(0)-\frac{1}{4}g_{n+1}h_5(0)$ so that for
 $V^{(n)}=W_+^{(n)}+\triangle_n$.
The nonisospectral zero curvature equation
\begin{equation}\label{b18}
\frac{\partial U}{\partial u}u_t+\frac{\partial U}{\partial \lambda}\lambda_{t}^{(n)}-V_x^{(n)}+[U,V^{(n)}]=0,
\end{equation}
leads to the  coupled nonisospectral KdV hierarchy  as follows:
\begin{equation}\label{b19}
u_{t_n}=\overline{K}_n=\left(\begin{matrix}
\frac{1}{2}c_{n+1,x}\\
\frac{1}{2}g_{n+1,x}
\end{matrix}
\right)=J\left(\begin{matrix}
c_{n+1}\\
g_{n+1}
\end{matrix}
\right)
=J[\overline{L}\left(\begin{matrix}
c_{n}\\
g_{n}
\end{matrix}
\right)+\frac{1}{2}k_n(t)\left(\begin{matrix}
x\\
0
\end{matrix}
\right)],
\end{equation}
where $J$ is given by \eqref{25} and
\begin{equation}\notag
 \overline{L}=\left(\begin{matrix}
\partial^2+4u_1-2\partial^{-1}u_{1x}&\varepsilon(4u_2-2\partial^{-1}u_{2x})\\
4u_2-2\partial^{-1}u_{2x}&\partial^2+4u_1-2\partial^{-1}u_{1x}
\end{matrix}
\right)=\left(\begin{matrix}
\overline{L}_{11}&\varepsilon\overline{L}_{21}\\
\overline{L}_{21}&\overline{L}_{11}
\end{matrix}
\right),
\end{equation}
which is determined by the  recursion equations \eqref{b15}.
The first two   examples in the above  hierarchy of soliton equations are
\begin{equation}\label{b21}
u_{t_0}=\frac{1}{2}\left(\begin{matrix}
c_{1,x}\\
g_{1,x}
\end{matrix}
\right)=\alpha_1\left(\begin{matrix}
u_{1x}\\
u_{2x}
\end{matrix}
\right)+\alpha_2\left(\begin{matrix}
\varepsilon u_{2x}\\
u_{1x}
\end{matrix}
\right)
+\frac{1}{4}k_0(t)\left(\begin{matrix}
1\\
0
\end{matrix}
\right),
\end{equation}
\begin{equation}\label{b22}
\begin{split}
u_{t_1}=&\alpha_1\left(\begin{matrix}
u_{1xxx}+6u_1u_{1x}+6\varepsilon u_2u_{2x}\\
u_{2xxx}+6u_1u_{2x}+6u_2u_{1x}
\end{matrix}
\right)+\alpha_2\left(\begin{matrix}
\varepsilon u_{2xxx}+6\varepsilon u_2u_{1x}+6\varepsilon u_1u_{2x}\\
u_{1xxx}+6u_1u_{1x}+6\varepsilon u_2u_{2x}
\end{matrix}
\right)\\
&+\frac{1}{2}k_0(t)\left(\begin{matrix}
2u_1+xu_{1x}\\
2u_2+xu_{2x}
\end{matrix}
\right)+\frac{1}{4}k_1(t)\left(\begin{matrix}
1\\
0
\end{matrix}
\right),
\end{split}
\end{equation}
which is a coupled nonisospectral KdV equation.\\
Taking $\alpha_1=1,$ $\alpha_2=0,$ $k_0(t)=k_1(t)=0$,
and \eqref{b22} becomes the Frobenius KdV equation
\begin{equation}\label{b23}
\begin{cases}
u_{1t}=u_{1xxx}+6u_1u_{1x}+6\varepsilon u_2u_{2x},\\
u_{2t}=u_{2xxx}+6u_1u_{2x}+6u_2u_{1x},
\end{cases}
\end{equation}
which  is derived by  the Frobenius-Virasoro algebra (see \cite{201,202}).
\eqref{b23} can be reduced to the coupled KdV equation (see \cite{203,204}) and the complexification of the KdV equation by taking $\varepsilon=0$ and $\varepsilon=-1$ respectively.
From \eqref{b12}, the integrable system \eqref{b23}  has the following Lax pairs:
\begin{equation}
\begin{split}
&\psi_x=U\psi=\left[\begin{matrix}
U_1&\varepsilon U_2\\
U_2&U_1
\end{matrix}
\right]\psi,\ \ U_1=\left(\begin{matrix}
0&-u_1+\frac{\lambda}{4}\\
1&0
\end{matrix}
\right),\ \ U_2=\left(\begin{matrix}
0&-u_2\\
0&0
\end{matrix}
\right),\\
&\psi_t=V\psi=\left[\begin{matrix}
V_1&\varepsilon V_2\\
V_2&V_1
\end{matrix}
\right]\psi, \ \ V_1=\left(\begin{matrix}
u_{1x}&-u_{1xx}-2u_1^2-2\varepsilon u_2^2-\frac{1}{2}\lambda u_1+\frac{1}{4}\lambda^2\\
2u_1+\lambda&-u_{1x}
\end{matrix}
\right),\\
&\ \ \ \ \ \ \ \ \ \ \ \ \ \ \ \ \ \ \ \ \ \ \ \ \ \ \ \ \ \ \ \ \ \ \ V_2=\left(\begin{matrix}
u_{2x}&-u_{2xx}-4u_1u_2-\frac{1}{2}\lambda u_2\\
2u_2&-u_{2x}
\end{matrix}
\right).
\end{split}
\end{equation}
\subsection{Bi-Hamiltonian structures}

In order to establish the bi-Hamiltonian structures of  the coupled nonisospectral KdV hierarchy \eqref{b19},  we use the following two sets of component-trace identity (see \cite{205})
\begin{equation}\notag
\begin{cases}
\frac{\delta}{\delta u}(<W_1,\frac{\partial U_2}{\partial \lambda}>+<W_2,\frac{\partial U_1}{\partial \lambda}>)=\lambda^{-\gamma}\frac{\partial}{\partial\lambda}\lambda^{\gamma}
(<W_1,\frac{\partial U_2}{\partial u}>+<W_2,\frac{\partial U_1}{\partial u}>),\\
\frac{\delta}{\delta u}(<W_1,\frac{\partial U_1}{\partial \lambda}>+\varepsilon<W_2,\frac{\partial U_2}{\partial \lambda}>)=\lambda^{-\gamma}\frac{\partial}{\partial\lambda}\lambda^{\gamma}
(<W_1,\frac{\partial U_1}{\partial u}>+\varepsilon<W_2,\frac{\partial U_2}{\partial u}>).
\end{cases}
\end{equation}
Substituting
\begin{equation}\notag
\begin{split}
&<W_1,\frac{\partial U_2}{\partial u}>+<W_2,\frac{\partial U_1}{\partial u}>=\left(\begin{matrix}
-g\\
-c\end{matrix}
\right),\ \
<W_1,\frac{\partial U_1}{\partial u}>+\varepsilon<W_2,\frac{\partial U_2}{\partial u}>=\left(\begin{matrix}
-c\\
-\varepsilon g\end{matrix}
\right),\\
 &<W_1,\frac{\partial U_2}{\partial \lambda}>+<W_2,\frac{\partial U_1}{\partial \lambda}>=\frac{1}{4}g,\ \ <W_1,\frac{\partial U_1}{\partial \lambda}>+\varepsilon<W_2,\frac{\partial U_2}{\partial \lambda}>=\frac{1}{4}c,\\
\end{split}
\end{equation}
 into the above  component-trace identity
 and comparing powers of $\lambda$, we obtain
\begin{equation}\label{b24}
\frac{\delta c_m}{\delta u}=-4(\gamma-m+1)\left(\begin{matrix}
c_{m-1}\\
\varepsilon g_{m-1}\end{matrix}
\right),\ \ \frac{\delta g_m}{\delta u}=-4(\gamma-m+1)\left(\begin{matrix}
g_{m-1}\\
c_{m-1}\end{matrix}
\right),\ \ m\geq1.
\end{equation}
where $W_1$,  $W_2$, $U_1$,  $U_2$ is given by \eqref{b12}.
One can find that $\gamma=-\frac{1}{2}$ via substituting $m=1$ into \eqref{b24} by means of \eqref{15*}. Therefore, we obtain
\begin{equation}\label{b25}
\begin{split}
&\left(\begin{matrix}
c_m\\
\varepsilon g_{m}\end{matrix}
\right)=\frac{\delta H_{1,m+1}}{\delta u},\ \
\left(\begin{matrix}
g_m\\
c_{m}\end{matrix}
\right)=\frac{\delta H_{2,m+1}}{\delta u},\ \ m\geq0.
\end{split}
\end{equation}
where
$ H_{1,m+1}=\frac{c_{m+1}}{2(2m+1)}$ and  $ H_{2,m+1}=\frac{g_{m+1}}{2(2m+1)}$  are the Hamiltonian functionals.
From \eqref{b25}, one can find  that the Hamiltonian structure of the hierarchy \eqref{b19} consists of  two component and   let us discuss the two components separately.

For the fist component,  we have:
\begin{equation}\label{b27}
\begin{split}
u_{t_n}=&\left(\begin{matrix}
\frac{1}{2}c_{n+1,x}\\
\frac{1}{2}g_{n+1,x}
\end{matrix}
\right)=\frac{1}{2}\left(\begin{matrix}
\partial&0\\
0&\frac{1}{\varepsilon}\partial
\end{matrix}
\right)\left(\begin{matrix}
c_{n+1}\\
\varepsilon g_{n+1}
\end{matrix}
\right)=:J_1\left(\begin{matrix}
c_{n+1}\\
\varepsilon g_{n+1}
\end{matrix}
\right)=J_1\frac{\delta H_{1,n+2}}{\delta u}\\
=&J_1(\overline{L}^T\frac{\delta H_{1,n+1}}{\delta u}+k_n(t)\left(\begin{matrix}
\frac{x}{2}\\
0\end{matrix}
\right))=:\overline{M}_1\frac{\delta H_{1,n+1}}{\delta u}+k_n(t)J_1\left(\begin{matrix}
\frac{x}{2}\\
0\end{matrix}
\right)\\
=&J_1[(\overline{L}^T)^{n+1}\frac{\delta H_{1,1}}{\delta u}+\sum\limits_{m=0}^{n}k_{m-1}(t)(\overline{L}^T)^{n-m}\left(\begin{matrix}
\frac{x}{2}\\
0\end{matrix}
\right)]\\
=&\overline{\Phi}_1^{n+1}J_1\frac{\delta H_{1,1}}{\delta u}+\sum\limits_{m=0}^{n}k_{m-1}(t)\overline{\Phi}_1^{n-m}J_1\left(\begin{matrix}
\frac{x}{2}\\
0\end{matrix}
\right),\ \ \ \ n\geq1,
\end{split}
\end{equation}
where $\overline{L}^T$ is the device matrix of matrix $\overline{L}$ in \eqref{b19}, $J_1$ and $\overline{M}_1$ are two Hamiltonian operators and $\overline{\Phi}_1$ is a hereditary symmetry operator  as follows:
\begin{equation}\label{b28}
\begin{split}
&J_1=\frac{1}{2}\left(\begin{matrix}
\partial&0\\
0&\frac{1}{\varepsilon}\partial
\end{matrix}
\right),\ \ \overline{M}_1=J_1\overline{L}^T=\frac{1}{2}\left(\begin{matrix}
\partial^3+2\partial u_1+2u_1\partial&2\partial u_2+2u_2\partial\\
2\partial u_2+2u_2\partial&\frac{1}{\varepsilon}(\partial^3+2\partial u_1+2u_1\partial)
\end{matrix}
\right),\\
&\overline{\Phi}_1=J_1\overline{L}^TJ_1^{-1}=\left(\begin{matrix}
\partial^2+4u_1+2u_{1x}\partial^{-1}&\varepsilon(4u_2+2u_{2x}\partial^{-1})\\
4u_2+2u_{2x}\partial^{-1}&\partial^2+4u_1+2u_{1x}\partial^{-1}
\end{matrix}
\right).
\end{split}
\end{equation}
For the second component,  one has:
\begin{equation}\label{b29}
\begin{split}
u_{t_n}=&\left(\begin{matrix}
\frac{1}{2}c_{n+1,x}\\
\frac{1}{2}g_{n+1,x}
\end{matrix}
\right)=\frac{1}{2}\left(\begin{matrix}
0&\partial\\
\partial&0
\end{matrix}
\right)\left(\begin{matrix}
g_{n+1}\\
c_{n+1}
\end{matrix}
\right)=:J_2\left(\begin{matrix}
g_{n+1}\\
c_{n+1}
\end{matrix}
\right)=J_2\frac{\delta H_{2,n+2}}{\delta u}\\
=&J_2(\overline{L}^T\frac{\delta H_{2,n+1}}{\delta u}+k_n(t)\left(\begin{matrix}
0\\
\frac{x}{2}\end{matrix}
\right))=:\overline{M}_2\frac{\delta H_{2,n+1}}{\delta u}+k_n(t)J_2\left(\begin{matrix}
0\\
\frac{x}{2}\end{matrix}
\right)\\
=&J_2[(\overline{L}^T)^{n+1}\frac{\delta H_{2,1}}{\delta u}+\sum\limits_{m=0}^{n}k_{m-1}(t)(\overline{L}^T)^{n-m}\left(\begin{matrix}
0\\
\frac{x}{2}\end{matrix}
\right)]\\
=&\overline{\Phi}_2^{n+1}J_2\frac{\delta H_{2,1}}{\delta u}+\sum\limits_{m=0}^{n}k_{m-1}(t)\overline{\Phi}_2^{n-m}J_2\left(\begin{matrix}
0\\
\frac{x}{2}\end{matrix}
\right),\ \ \ \ n\geq1,
\end{split}
\end{equation}
where  $J_2$ and $\overline{M}_2$ are two Hamiltonian operators and $\overline{\Phi}_2$ is a hereditary symmetry operator  as follows:
\begin{equation}\label{b30}
\begin{split}
&J_2=\frac{1}{2}\left(\begin{matrix}
0&\partial\\
\partial&0
\end{matrix}
\right),\ \ \overline{M}_2=J_2\overline{L}^T=\frac{1}{2}\left(\begin{matrix}
\varepsilon(2\partial u_2+2u_2\partial)&\partial^3+2\partial u_1+2u_1\partial\\
\partial^3+2\partial u_1+2u_1\partial&2\partial u_2+2u_2\partial
\end{matrix}
\right),\\
&\overline{\Phi}_2=J_2\overline{L}^TJ_2^{-1}=\left(\begin{matrix}
\partial^2+4u_1+2u_{1x}\partial^{-1}&\varepsilon(4u_2+2u_{2x}\partial^{-1})\\
4u_2+2u_{2x}\partial^{-1}&\partial^2+4u_1+2u_{1x}\partial^{-1}
\end{matrix}
\right)=\overline{\Phi}_1=:\overline{\Phi}.
\end{split}
\end{equation}
After comparing and analyzing the results of these two components, we obtain the bi-Hamiltonian structures of  the coupled nonisospectral KdV hierarchy \eqref{b19} as follows:
\begin{equation}\label{b31}
\begin{cases}
u_{t_n}=J_1\frac{\delta H_{1,n+2}}{\delta u}=\overline{M}_1\frac{\delta H_{1,n+2}}{\delta u}+k_n(t)J_1\left(\begin{matrix}
\frac{x}{2}\\
0\end{matrix}
\right)=\overline{\Phi}^nJ_1\left(\begin{matrix}
c_1\\
\varepsilon g_1\end{matrix}
\right)+\frac{1}{2}\sum\limits_{m=1}^{n}k_{m}(t)\overline{\Phi}^{n-m}J_1\left(\begin{matrix}
\frac{x}{2}\\
0\end{matrix}
\right),
\\
u_{t_n}=J_2\frac{\delta H_{2,n+2}}{\delta u}=\overline{M}_2\frac{\delta H_{2,n+2}}{\delta u}+k_n(t)J_2\left(\begin{matrix}
0\\
\frac{x}{2}\end{matrix}
\right)=\overline{\Phi}^nJ_2\left(\begin{matrix}
g_1\\
c_1\end{matrix}
\right)+\frac{1}{2}\sum\limits_{m=1}^{n}k_{m}(t)\overline{\Phi}^{n-m}J_2\left(\begin{matrix}
0\\
\frac{x}{2}\end{matrix}
\right),
\end{cases}
\end{equation}
with the recursion operator
\begin{equation}\notag
\overline{\Phi}=\left(\begin{matrix}
\partial^2+4u_1+2u_{1x}\partial^{-1}&\varepsilon(4u_2+2u_{2x}\partial^{-1})\\
4u_2+2u_{2x}\partial^{-1}&\partial^2+4u_1+2u_{1x}\partial^{-1}
\end{matrix}
\right).
\end{equation}
After a simple calculation, one can find that the two equations of \eqref{b31} can be combined into one equation as follows:
\begin{equation}\label{b32}
u_{t_n}=\overline{K}_n=\alpha_1\overline{\Phi}^n\left(\begin{matrix}
u_{1x}\\
u_{2x}\end{matrix}
\right)+\alpha_2\overline{\Phi}^n\left(\begin{matrix}
\varepsilon u_{2x}\\
u_{1x}\end{matrix}
\right)+\frac{1}{2}\sum\limits_{m=0}^{n}k_{m}(t)\overline{\Phi}^{n-m}\left(\begin{matrix}
\frac{1}{2}\\
0\end{matrix}
\right).
\end{equation}
which is  equivalent to the coupled nonisospectral KdV hierarchy \eqref{b19}.
 Selecting $\alpha_1=1,$ $\alpha_2=0,$ and then \eqref{b32} becomes
\begin{equation}\label{b33}
u_{t_n}=\overline{K}_n=\overline{\Phi}^n\left(\begin{matrix}
u_{1x}\\
u_{2x}\end{matrix}
\right)+\frac{1}{2}\sum\limits_{m=0}^{n}k_{m}(t)\overline{\Phi}^{n-m}\left(\begin{matrix}
\frac{1}{2}\\
0\end{matrix}
\right),
\end{equation}
where $\overline{\Phi}$ is given by \eqref{b31}.

Obviously, both $\overline{M}_1$  and $\overline{M}_2$ of \eqref{b31} are antisymmetric operators, that is $\overline{M}_1^\ast=-\overline{M}_1$, $\overline{M}_2^\ast=-\overline{M}_2$, and thus we have
\begin{equation}\notag
\overline{M}_1=\overline{\Phi}J_1=J_1\overline{\Phi}^\ast,\ \ \overline{M}_2=\overline{\Phi}J_2=J_2\overline{\Phi}^\ast,
\end{equation}
where $\overline{\Phi}^\ast$  is the complex conjugate of $\overline{\Phi}$.
We can conclude that the integrable hierarchy \eqref{b19} is  integrable in the sense of Liouville,  and then we  obtain the Abelian algebra of symmetries
\begin{equation}\label{b34}
[\overline{K}_i,\overline{K}_j]=\overline{K}_i'(u)[\overline{K}_j]-\overline{K}_j'(u)[\overline{K}_i]=0,\ \  i,j\geq0,
\end{equation}
and the component Abelian algebras of conserved functionals,
\begin{equation}\label{b35}
\begin{split}
\{H_{1,i},H_{1,j}\}_{J_1}=\int(\frac{\delta H_{1,i}}{\delta u})^TJ_1\frac{\delta H_{1,j}}{\delta u}dx=0,\ \ \{H_{1,i},H_{1,j}\}_{\overline{M}_1}=\int(\frac{\delta H_{1,i}}{\delta u})^TJ_1\frac{\delta H_{1,j}}{\delta u}dx=0,\\
\{H_{2,i},H_{2,j}\}_{J_2}=\int(\frac{\delta H_{2,i}}{\delta u})^TJ_2\frac{\delta H_{2,j}}{\delta u}dx=0,\ \ \{H_{2,i},H_{2,j}\}_{\overline{M}_2}=\int(\frac{\delta H_{2,i}}{\delta u})^TJ_2\frac{\delta H_{2,j}}{\delta u}dx=0.
\end{split}
\end{equation}

\section{Symmetries and their Lie algebra of the coupled nonisospectral KdV hierarchy}

In this section, we will discuss the $K$ symmetries, $\tau$ symmetries and their Lie algebra of the coupled nonisospectral KdV hierarchy \eqref{b33}.
Let us first recall some basic notions and properties related to symmetries. Assume $G(u)=U(u,u_x,\ldots)$ is a function, then the Gateaux derivative can be  defined as:
$
G'(u)[\sigma]=\lim_{\eta\rightarrow0}\frac{d}{d\eta}G(u+\eta\sigma).
$
$\sigma_t=K_n'[\sigma]$ is a linearised equation associated with a given  soliton hierarchy $u_t=K_n$, and then we call $\sigma$  a symmetry    of the  given soliton hierarchy.
 In integrable hierarchy \eqref{b33}, one has:
\begin{equation}\label{b36}
\begin{split}
&\overline{\Phi}=\left(\begin{matrix}
\partial^2+4u_1+2u_{1x}\partial^{-1}&\varepsilon(4u_2+2u_{2x}\partial^{-1})\\
4u_2+2u_{2x}\partial^{-1}&\partial^2+4u_1+2u_{1x}\partial^{-1}
\end{matrix}
\right)=:\phi,\  \ K_0=\left(\begin{matrix}
u_{1x}\\
u_{2x}\end{matrix}
\right),\ \ \sigma_0=\left(\begin{matrix}
\frac{1}{2}\\
\frac{1}{2}\end{matrix}
\right),\\
&K_1=\phi K_0=\left(\begin{matrix}
u_{1xxx}+6u_1u_{1x}+6\varepsilon u_2u_{2x}\\
u_{2xxx}+6u_1u_{2x}+6u_2u_{1x}
\end{matrix}
\right).
\end{split}
\end{equation}
Here we use $\phi$ to represent $\overline{\Phi}$ for convenience.

\begin{lemma}
$\phi$ is  a hereditary symmetry of the hierarchy \eqref{b33}, that is,
\begin{equation}\label{b37}
\phi'[\phi f]g-\phi'[\phi g]f=\phi\{\phi'[f]g-\phi'[g]f\}.
\end{equation}
\end{lemma}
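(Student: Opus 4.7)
The plan is to prove the hereditary identity by direct computation based on the explicit form of $\phi$. First I would note that the only $u$-dependent parts of $\phi$ are the multiplication operators $4u_1$, $4\varepsilon u_2$ and the integral-type terms $2u_{1x}\partial^{-1}$, $2\varepsilon u_{2x}\partial^{-1}$ (and their off-diagonal analogues); the $\partial^2$ block contributes nothing to $\phi'[f]$. Writing $f=(f_1,f_2)^T$, $g=(g_1,g_2)^T$, I would extract the closed form
\begin{equation}\notag
\phi'[f]g=\begin{pmatrix} 4f_1g_1+2f_{1x}\partial^{-1}g_1+4\varepsilon f_2g_2+2\varepsilon f_{2x}\partial^{-1}g_2\\ 4f_2g_1+2f_{2x}\partial^{-1}g_1+4f_1g_2+2f_{1x}\partial^{-1}g_2\end{pmatrix},
\end{equation}
which is linear in $f$ and $g$ and contains no $u$-dependence.

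Next I would substitute $\phi f$ (respectively $\phi g$) into this formula to compute the left-hand side $\phi'[\phi f]g-\phi'[\phi g]f$; the components $(\phi f)_i$ and their $x$-derivatives expand into sums of terms like $f_{ixx}$, $4u_1 f_i$, $2u_{1x}\partial^{-1}f_i$, etc. For the right-hand side I would act with $\phi$ on the antisymmetric combination $\phi'[f]g-\phi'[g]f$, distributing $\partial^2+4u_1+2u_{1x}\partial^{-1}$ across each component. The crucial algebraic identities needed to match both sides are the operator relations $\partial\circ\partial^{-1}=\mathrm{id}$ and the integration-by-parts rule $\partial^{-1}(\alpha_x\beta)=\alpha\beta-\partial^{-1}(\alpha\beta_x)$, applied symbolically to convert expressions of the form $(\phi f)_{ix}\partial^{-1}g_j$ into sums involving $\partial^{-1}(f_ig_j)$ and local pieces.

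The main obstacle will be the bookkeeping: after expansion, each side produces many terms of three distinct types — $u$-dependent terms (linear in $u_1,u_2$), purely $f$–$g$ bilinear local terms, and non-local terms involving $\partial^{-1}$. The hereditary property is an antisymmetric statement in $f,g$, so essentially all $u$-dependent terms must cancel between the two sides by the antisymmetrization, and the remaining purely bilinear terms must match by direct inspection after the integration-by-parts reductions. Organizing the calculation component-by-component, and grouping terms by whether they are local or involve $\partial^{-1}$, will make this systematic rather than chaotic.

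An alternative and conceptually cleaner route is to leverage the bi-Hamiltonian structure derived in Section 4: since $J_1$ and $\overline{M}_1=\phi J_1$ are both Hamiltonian operators (as established in \eqref{b28}), a standard theorem of Magri--Fuchssteiner implies that $\phi=\overline{M}_1 J_1^{-1}$ is hereditary once the pair $(J_1,\overline{M}_1)$ is compatible, i.e. once $J_1+\mu\overline{M}_1$ is Hamiltonian for every constant $\mu$. Checking this compatibility is a more structured verification than the direct brute-force computation, and would then yield \eqref{b37} as an immediate corollary.
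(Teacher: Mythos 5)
Your main route---computing $\phi'[f]$ explicitly from the Gateaux derivative (only the $4u_i$ and $2u_{ix}\partial^{-1}$ parts contribute), forming the antisymmetric combination, substituting $\phi f$ and $\phi g$, and closing the computation with $\partial\circ\partial^{-1}=\mathrm{id}$ and integration by parts on $\partial^{-1}$---is exactly the paper's proof, and your expression for $\phi'[f]g$ agrees with the paper's $\phi'[f]$; you are in fact more explicit than the paper about which identities make the final matching work. The alternative Magri--Fuchssteiner compatibility argument you sketch is a legitimate, more structural route not taken in the paper, but the direct computation you describe first already suffices.
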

\begin{proof}
This theorem can be directly calculated via the definition of Gateaux derivative.\\
For any $f=\left(\begin{matrix}
f_1\\
f_2\end{matrix}
\right)$, $g=\left(\begin{matrix}
g_1\\
g_2\end{matrix}
\right)$, one has
\begin{equation}\notag
\phi'[f]=\left(\begin{matrix}
4f_1+2f_{1x}\partial^{-1}&\varepsilon(4f_2+2f_{2x}\partial^{-1})\\
4f_2+2f_{2x}\partial^{-1}&4f_1+2f_{1x}\partial^{-1}
\end{matrix}
\right).
\end{equation}
It follows that we obtain
\begin{equation}\notag
\phi'[f]g-\phi'[g]f=\left(\begin{matrix}
2f_{1x}\partial^{-1}g_1-2g_{1x}\partial^{-1}f_1+2\varepsilon f_{2x}\partial^{-1}g_2-2\varepsilon g_{2x}\partial^{-1}f_2\\
2f_{2x}\partial^{-1}g_1-2g_{2x}\partial^{-1}f_1+2f_{2x}\partial^{-1}g_2-2g_{2x}\partial^{-1}f_2
\end{matrix}
\right).
\end{equation}
Since
\begin{equation}\notag
\phi f=\left(\begin{matrix}
f_{1xx}+4u_1f_1+2u_{1x}\partial^{-1}f_1+\varepsilon(4u_2f_2+2u_{2x}\partial^{-1}f_2)\\
f_{2xx}+4u_1f_2+2u_{1x}\partial^{-1}f_2+4u_2f_1+2u_{2x}\partial^{-1}f_1
\end{matrix}
\right)=:\left(\begin{matrix}
s_1\\
s_2
\end{matrix}
\right),
\end{equation}
\begin{equation}\notag
\phi g=\left(\begin{matrix}
g_{1xx}+4u_1g_1+2u_{1x}\partial^{-1}g_1+\varepsilon(4u_2g_2+2u_{2x}\partial^{-1}g_2)\\
g_{2xx}+4u_1g_2+2u_{1x}\partial^{-1}g_2+4u_2g_1+2u_{2x}\partial^{-1}g_1
\end{matrix}
\right)=:\left(\begin{matrix}
s_3\\
s_4
\end{matrix}
\right),
\end{equation}
then
\begin{equation}\notag
\phi'[\phi f]g-\phi'[\phi g]f=2\left(\begin{matrix}
s_{1x}\partial^{-1}g_1-s_{3x}\partial^{-1}f_1+2s_1g_1-2s_3f_1
+\varepsilon(s_{2x}\partial^{-1}g_2-s_{4x}\partial^{-1}f_2+2s_2g_2-2s_4f_2)\\
s_{2x}\partial^{-1}g_1-s_{4x}\partial^{-1}f_1+2s_2g_1-2s_4f_1
+s_{1x}\partial^{-1}g_1-s_{3x}\partial^{-1}f_4+2s_1g_2-2s_3f_2
\end{matrix}
\right).
\end{equation}
After a direct calculation, one can find that \eqref{b37} holds.
\end{proof}

\begin{lemma}
$\phi$ is  a strong symmetry of the hierarchy \eqref{b33}, that is,
\begin{equation}\label{b38}
\phi'[K_m]=[K_m',\phi],\  \ m\geq0.
\end{equation}
\end{lemma}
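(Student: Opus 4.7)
The plan is to proceed by induction on $m$, relying crucially on the hereditary property of $\phi$ just established in Lemma 1. This is the standard Magri--Fuchssteiner argument: once the strong symmetry identity holds for the seed $K_0$, heredity propagates it along the hierarchy $K_{m+1}=\phi K_m$.

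For the base case $m=0$, I would note that $K_0=(u_{1x},u_{2x})^T$, so the Gateaux derivative is simply $K_0'[g]=(g_{1x},g_{2x})^T$, i.e.\ $K_0'=\partial I_2$. Hence $[K_0',\phi]=\partial\phi-\phi\partial$, which can be computed entrywise using the explicit form of $\phi$ in \eqref{b36} by commuting $\partial$ past the multiplication operators $4u_i$ and past the nonlocal operators $2u_{ix}\partial^{-1}$. On the other hand, $\phi'[K_0]$ is obtained by substituting $f=K_0$ into the formula for $\phi'[f]$ derived in the proof of Lemma 1. A short direct check shows that both sides agree, each producing the matrix operator whose diagonal entries are $4u_{1x}+2u_{1xx}\partial^{-1}$ and whose off-diagonal entries are the corresponding $u_2$-terms, multiplied by $\varepsilon$ in the appropriate slot.

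For the inductive step, suppose $\phi'[K_m]=[K_m',\phi]$. Because $K_{m+1}=\phi K_m$, the Leibniz rule for the Gateaux derivative gives $K_{m+1}'[g]=\phi'[g]K_m+\phi K_m'[g]$, so that
\begin{equation*}
[K_{m+1}',\phi]g = K_{m+1}'[\phi g]-\phi K_{m+1}'[g] = \bigl(\phi'[\phi g]-\phi\phi'[g]\bigr)K_m + \phi[K_m',\phi]g.
\end{equation*}
On the other hand, the hereditary identity \eqref{b37} from Lemma 1, applied with $f=K_m$, rearranges to
\begin{equation*}
\phi'[K_{m+1}]g = \phi'[\phi K_m]g = \bigl(\phi'[\phi g]-\phi\phi'[g]\bigr)K_m + \phi\phi'[K_m]g.
\end{equation*}
Applying the inductive hypothesis to the last term shows the two expressions coincide, which closes the induction.

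The main obstacle is really the base case: checking $\phi'[K_0]=[K_0',\phi]$ by hand reduces to the mechanical but slightly tedious task of commuting $\partial$ through the nonlocal entries of $\phi$ and matching the result against $\phi'[K_0]$ entrywise, which requires care with the $\partial^{-1}$ terms. The inductive step, by contrast, is purely algebraic and invokes nothing beyond the hereditary identity already established in Lemma 1.
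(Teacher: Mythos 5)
Your proposal is correct and follows essentially the same route as the paper: a direct verification that $\phi'[K_0]=[K_0',\phi]$ using $K_0'=\partial$, followed by propagation along the hierarchy via the hereditary identity \eqref{b37}. The only difference is that you spell out the inductive step (via the Leibniz rule $K_{m+1}'[g]=\phi'[g]K_m+\phi K_m'[g]$ and \eqref{b37} applied with $f=K_m$) which the paper compresses into the single remark that the claim follows ``since the hereditary symmetry property of $\phi$.''
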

\begin{proof}
A direct calculation gives rise to
\begin{equation}\notag
\phi'[K_0]=\left(\begin{matrix}
2u_{1xx}\partial^{-1}+4u_{1x}&\varepsilon(2u_{2xx}\partial^{-1}+4u_{2x})\\
2u_{2xx}\partial^{-1}+4u_{2x}&2u_{1xx}\partial^{-1}+4u_{1x}
\end{matrix}
\right).
\end{equation}
For a   test function $\sigma=\left(\begin{matrix}
\sigma_1,\sigma_2\end{matrix}
\right)^T$,  we have
\begin{equation}\notag
\begin{split}
(K_0)'[\sigma]=\frac{d}{d\eta}\mid_{\eta=0}\left(\begin{matrix}
(u_1+\eta\sigma_1)_x\\
(u_2+\eta\sigma_2)_x\end{matrix}
\right)=\partial\left(\begin{matrix}
\sigma_1\\
\sigma_2\end{matrix}
\right)\ \Longrightarrow \ \
(K_0)'=\partial.
\end{split}
\end{equation}
Hence, we obtain
\begin{equation}\notag
K_0'\phi=\left(\begin{matrix}
\partial^3+6u_{1x}+4u_1\partial+2u_{1xx}\partial^{-1}&\varepsilon(6u_{2x}+4u_2\partial+2u_{2xx}\partial^{-1})\\
6u_{2x}+4u_2\partial+2u_{2xx}\partial^{-1}&\partial^3+6u_{1x}+4u_1\partial+2u_{1xx}\partial^{-1}
\end{matrix}
\right),
\end{equation}
\begin{equation}\notag
\phi K_0'=\left(\begin{matrix}
\partial^3+2u_{1x}+4u_1\partial&\varepsilon(2u_{2x}+4u_2\partial)\\
2u_{2x}+4u_2\partial&\partial^3+2u_{1x}+4u_1\partial
\end{matrix}
\right).
\end{equation}
It follows that we have $\phi'[K_0]=[K_0',\phi]=K_0'\Phi-\Phi K_0'$.
Therefore, we  verify that  \eqref{b38} holds since the hereditary symmetry property of $\phi$.
\end{proof}

\begin{theorem}
\begin{equation}\label{b39}
[K_m,K_n]=K_m'[K_n]-K_n'[K_m]=0, \ \ m,n=0,1,2,\cdots
\end{equation}
where $K_m=\phi^m\left(\begin{matrix}
u_{1x}\\
u_{2x}\end{matrix}
\right)$, \ \ $K_n=\phi^n\left(\begin{matrix}
u_{1x}\\
u_{2x}\end{matrix}
\right)$.
\end{theorem}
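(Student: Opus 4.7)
The plan is to deduce the commutativity of the hierarchy from the two structural properties of $\phi$ established just above: $\phi$ is hereditary (Lemma 1) and a strong symmetry of the hierarchy (Lemma 2). Together these trigger the classical Fuchssteiner--Magri mechanism, which forces the iterates $K_m=\phi^m K_0$ to generate an abelian algebra under the vector-field bracket.

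First I would upgrade Lemma 2 by induction on $m$ to the statement that $\phi$ is a strong symmetry for every $K_m$, i.e. $\phi'[K_m]=[K_m',\phi]$. Writing $K_{m+1}=\phi K_m$ and using the Gateaux Leibniz rule
\begin{equation}\notag
K_{m+1}'[\sigma]=\phi'[\sigma]K_m+\phi K_m'[\sigma],
\end{equation}
the hereditary identity of Lemma 1 applied with $f=K_m$ rewrites $\phi'[\phi K_m]g$ as a combination of $\phi'[K_m]g$ and $\phi'[\phi g]K_m$. Inserting the inductive hypothesis $\phi'[K_m]=K_m'\phi-\phi K_m'$ and reorganizing then collapses both sides of $\phi'[K_{m+1}]=[K_{m+1}',\phi]$ to the same expression, closing the induction.

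Second, with strong symmetry in hand I would establish the key identity
\begin{equation}\notag
[\phi K_m,K_n]=\phi\,[K_m,K_n].
\end{equation}
This follows from a short calculation: the left side equals $\phi'[K_n]K_m+\phi K_m'K_n-K_n'\phi K_m$, and substituting $\phi'[K_n]=K_n'\phi-\phi K_n'$ reorganizes it into $\phi(K_m'K_n-K_n'K_m)$. Combined with the symmetric version $[K_m,\phi K_n]=\phi[K_m,K_n]$, this lets us peel off one factor of $\phi$ from each side at a time, so by induction $[K_m,K_n]=\phi^{\min(m,n)}[K_{|m-n|},K_0]$. It therefore suffices to show $[K_j,K_0]=0$ for all $j\ge 0$, which is immediate from the same peeling: $[K_j,K_0]=\phi[K_{j-1},K_0]=\cdots=\phi^{j}[K_0,K_0]=0$.

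The main obstacle I anticipate is the bookkeeping in the first step: the hereditary identity must be deployed in the correct direction and combined carefully with the Leibniz rule, so that the nonlocal $\partial^{-1}$-pieces produced by differentiating $\phi K_m$ cancel exactly against those supplied by $\phi'[\phi K_m]$. Once $\phi$ is known to be a strong symmetry of every $K_m$, the commutativity is a one-line algebraic consequence plus a trivial induction.
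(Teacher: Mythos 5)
Your proposal is correct and follows essentially the same route as the paper: the paper's proof of this theorem is a one-line appeal to Lemma 1 (hereditariness), Lemma 2 (strong symmetry), and the Leibniz rule $(\phi G)'[\sigma]=\phi'[\sigma]G+\phi G'[\sigma]$, citing the standard reference, and your argument simply spells out that classical mechanism (note that Lemma 2 as stated already asserts $\phi'[K_m]=[K_m',\phi]$ for all $m$, so your first step reproves part of it). The only blemish is the intermediate formula $[K_m,K_n]=\phi^{\min(m,n)}[K_{|m-n|},K_0]$, where simultaneous peeling actually yields $\phi^{2\min(m,n)}$ rather than $\phi^{\min(m,n)}$; this is harmless since the bracket being peeled down is ultimately zero.
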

\begin{proof}
Based on lemma 1, lemma 2,  the formula
\begin{equation}\label{b40}
(\phi G)'[\sigma]=\phi'[\sigma]G+\Phi G'[\sigma],
\end{equation}
and the operator $\phi$, we can verify that \eqref{b39} holds (see \cite{208}).
\end{proof}

\begin{lemma}
\begin{equation}\label{b41}
[K_m,\sigma_0]=K_m'\left[\begin{matrix}
\frac{1}{2}\\
\frac{1}{2}\end{matrix}
\right]=(2m+1)HK_{m-1}, \ \ m=1,2,\cdots
\end{equation}
where $H=\left(\begin{matrix}
1&\varepsilon\\
1&1\end{matrix}
\right)$.
\end{lemma}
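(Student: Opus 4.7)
The plan is to reduce the Lie bracket to a single Gateaux derivative and then proceed by induction on $m$, leveraging the hereditary/strong-symmetry machinery already established in Lemmas 1 and 2. Since $\sigma_0 = (\tfrac{1}{2},\tfrac{1}{2})^T$ is a constant vector, its Gateaux derivative vanishes, so $[K_m,\sigma_0] = K_m'[\sigma_0]$, and it suffices to prove
\begin{equation}\notag
K_m'[\sigma_0] = (2m+1)H K_{m-1},\qquad m\geq 1.
\end{equation}

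For the base case $m=1$, I would differentiate $K_1$ (as given in \eqref{b36}) directly in the direction $\sigma_0$. The third-derivative term kills, and each factor of $u_i$ or $u_{i,x}$ contributes via the product rule; the surviving terms collapse to $(3u_{1x}+3\varepsilon u_{2x},\,3u_{1x}+3u_{2x})^T = 3HK_0$, confirming the formula at $m=1$.

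For the inductive step, assume $K_{m-1}'[\sigma_0] = (2m-1)HK_{m-2}$. Using $K_m = \phi K_{m-1}$ together with the product formula \eqref{b40},
\begin{equation}\notag
K_m'[\sigma_0] = \phi'[\sigma_0]K_{m-1} + \phi\,K_{m-1}'[\sigma_0].
\end{equation}
Since $\sigma_0$ has zero $x$-derivatives, the operator $\phi'[\sigma_0]$ specializes from the expression given in the proof of Lemma 1 to the constant matrix
\begin{equation}\notag
\phi'[\sigma_0]=\begin{pmatrix}2 & 2\varepsilon\\ 2 & 2\end{pmatrix}=2H.
\end{equation}
Substituting the induction hypothesis yields $K_m'[\sigma_0] = 2HK_{m-1} + (2m-1)\phi H K_{m-2}$.

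The remaining step, which I view as the only non-mechanical point, is to verify the operator identity $\phi H = H\phi$. A direct block computation using $\phi=\bigl(\begin{smallmatrix}\phi_{11}&\varepsilon\phi_{12}\\ \phi_{12}&\phi_{11}\end{smallmatrix}\bigr)$ with $\phi_{11}=\partial^2+4u_1+2u_{1x}\partial^{-1}$ and $\phi_{12}=4u_2+2u_{2x}\partial^{-1}$ shows both products equal $\bigl(\begin{smallmatrix}\phi_{11}+\varepsilon\phi_{12}&\varepsilon(\phi_{11}+\phi_{12})\\ \phi_{11}+\phi_{12}&\phi_{11}+\varepsilon\phi_{12}\end{smallmatrix}\bigr)$. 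Hence $\phi H K_{m-2}=H\phi K_{m-2}=HK_{m-1}$, giving
\begin{equation}\notag
K_m'[\sigma_0] = 2HK_{m-1} + (2m-1)HK_{m-1} = (2m+1)HK_{m-1},
\end{equation}
which completes the induction. The principal obstacle is thus purely algebraic, namely checking that $H$ commutes with the recursion operator $\phi$; everything else follows from the base case, the product rule \eqref{b40}, and the known form of $\phi'$.
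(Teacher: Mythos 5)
Your proof is correct and follows essentially the same route as the paper's: a direct computation of $K_1'[\sigma_0]=3HK_0$ for the base case, then induction via the product rule \eqref{b40} together with $\phi'[\sigma_0]=2H$. The only difference is that you explicitly verify the commutation $\phi H=H\phi$, a step the paper uses silently in passing from $\phi(2l-1)HK_{l-2}$ to $(2l-1)HK_{l-1}$; making it explicit is a small but genuine improvement in rigor.
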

\begin{proof}
Based on \eqref{b36}, one has
\begin{equation}\notag
[K_1,\sigma_0]=\left(\begin{matrix}
u_{1xxx}+6u_1u_{1x}+6\varepsilon u_2u_{2x}\\
u_{2xxx}+6u_1u_{2x}+6u_2u_{1x}
\end{matrix}
\right)'\left[\begin{matrix}
\frac{1}{2}\\
\frac{1}{2}\end{matrix}
\right]=3\left(\begin{matrix}
u_{1x}+\varepsilon u_{2x}\\
u_{1x}+u_{2x}\end{matrix}
\right)=3HK_0.
\end{equation}
Assume \eqref{b41} holds for $m=l-1$, that is
\begin{equation}\notag
[K_{l-1},\sigma_0]=(2k-1)HK_{l-2}, \ \ l\geq3.
\end{equation}
Since
\begin{equation}\label{b42}
\phi'\sigma_0=\phi'\left[\begin{matrix}
\frac{1}{2}\\
\frac{1}{2}\end{matrix}
\right]=2H.
\end{equation}
It follows that  one has
\begin{equation}\notag
\begin{split}
[K_{l},\sigma_0]&=(\phi\phi^{l-1}\left(\begin{matrix}
u_{1x}\\
u_{2x}\end{matrix}
\right))'\left[\begin{matrix}
\frac{1}{2}\\
\frac{1}{2}\end{matrix}
\right]=\phi'\left[\begin{matrix}
\frac{1}{2}\\
\frac{1}{2}\end{matrix}
\right]\phi^{l-1}\left(\begin{matrix}
u_{1x}\\
u_{2x}\end{matrix}
\right)+\phi(\phi^{l-1}\left(\begin{matrix}
u_{1x}\\
u_{2x}\end{matrix}
\right))'\left[\begin{matrix}
\frac{1}{2}\\
\frac{1}{2}\end{matrix}
\right]\\
&=2H\phi^{l-1}\left(\begin{matrix}
u_{1x}\\
u_{2x}\end{matrix}
\right)+\phi[K_{l-1},\sigma_0]=2HK_{l-1}+\phi(2l-1)HK_{l-2}=(2l+1)HK_{l-1}.
\end{split}
\end{equation}
Therefore, \eqref{b41} holds by induction.
\end{proof}

\begin{lemma}
\begin{equation}\label{b43}
[K_m,\phi^n\sigma_0]=(2m+1)HK_{m+n-1}, \ \ m=1,2,\cdots,\ \ n=0,1,2,\cdots.
\end{equation}
\end{lemma}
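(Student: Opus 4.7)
The plan is to induct on $n$, with the base case $n=0$ being exactly Lemma 4. The engine of the induction is the fact that, when $\phi$ is both a strong symmetry and hereditary, any flow $K_m$ in the hierarchy commutes with $\phi$ in the Lie-bracket sense. Concretely, I will first establish the identity
\begin{equation}\notag
[K_m,\phi G]=\phi [K_m,G]
\end{equation}
for an arbitrary smooth vector field $G$. To see this, expand the bracket using the product rule \eqref{b40}: $(\phi G)'[K_m]=\phi'[K_m]G+\phi G'[K_m]$, and rewrite $K_m'\phi$ via the strong-symmetry property $\phi'[K_m]=[K_m',\phi]$ from Lemma 2. The $\phi'[K_m]G$ terms cancel and one is left with $\phi(K_m'G-G'[K_m])=\phi[K_m,G]$, as desired.

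Iterating this identity $n$ times, I obtain $[K_m,\phi^n\sigma_0]=\phi^n[K_m,\sigma_0]$. Plugging in Lemma 4 for the inner bracket gives
\begin{equation}\notag
[K_m,\phi^n\sigma_0]=(2m+1)\phi^n H K_{m-1}.
\end{equation}

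To conclude, I need to move $\phi^n$ past $H$ and absorb it into $K_{m-1}$. Here I would verify by direct calculation that the constant matrix $H=\bigl(\begin{smallmatrix}1&\varepsilon\\1&1\end{smallmatrix}\bigr)$ commutes with $\phi$: since $\phi$ has the block pattern $\bigl(\begin{smallmatrix}L_{11}&\varepsilon L_{12}\\L_{12}&L_{11}\end{smallmatrix}\bigr)$ with scalar-in-$\varepsilon$ structure matching that of $H$, both $\phi H$ and $H\phi$ produce the same matrix of operators. Combined with $\phi K_{m-1}=K_m$ (by the very definition $K_p=\phi^p K_0$ in \eqref{b36}), this yields $\phi^n H K_{m-1}=HK_{m+n-1}$, completing the proof.

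The only mildly non-routine step is the verification that $[K_m,\phi G]=\phi[K_m,G]$, which is a standard consequence of $\phi$ being a hereditary strong symmetry but whose bookkeeping (distinguishing Gateaux derivatives of $\phi$ as an operator from those of $\phi G$ as a vector field) must be done carefully. The commutation $\phi H=H\phi$ is routine but crucial; without it one cannot pull $H$ through the $\phi^n$ factor, and the clean form $(2m+1)HK_{m+n-1}$ would fail.
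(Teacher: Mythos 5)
Your proof is correct and takes essentially the same route as the paper: the identity $[K_m,\phi G]=\phi[K_m,G]$ that you isolate (via the product rule \eqref{b40} and the strong-symmetry relation \eqref{b38}) is precisely the computation the paper performs inside its induction step on $n$, and the final facts $\phi H=H\phi$ and $\phi K_{p}=K_{p+1}$ are used there implicitly as well. Your version is just a cleaner packaging of the same argument, with the base case supplied by the preceding lemma ($[K_m,\sigma_0]=(2m+1)HK_{m-1}$).
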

\begin{proof}
From Lemma 3, one can find  that \eqref{b43} holds when n=1.
Assume \eqref{b43} holds for $n=l-1$, that is
\begin{equation}\notag
[K_m,\phi^{l-1}\sigma_0]=(2m+1)HK_{m+l-2}, \ \ l\geq3.
\end{equation}
Then, we have
\begin{equation}\notag
\begin{split}
[K_{m},\phi^{l}\sigma_0]&=K_{m}'[\phi^{l}\sigma_0]-(\phi^{l}\sigma_0)'[K_{m}]\\
&=K_{m}'[\phi^{l}\sigma_0]-\phi'[K_{m}]\phi^{l-1}\sigma_0-\phi(\phi^{l-1}\sigma_0)'[K_{m}]\\
&=K_{m}'[\phi^{l}\sigma_0]-(K_{m}'\phi-\phi K_{m}')\phi^{l-1}\sigma_0-\phi(\phi^{l-1}\sigma_0)'[K_{m}]\\
&=\phi[K_m,\phi^{l-1}\sigma_0]=(2m+1)HK_{m+l-1}.
\end{split}
\end{equation}
Hence, we conclude that \eqref{b43} holds by induction.
\end{proof}

\begin{lemma}
\begin{equation}\label{b44}
[\phi^m\sigma_0,\sigma_0]=2m\phi^{m-1}H\sigma_0, \ \ m=1,2,\cdots.
\end{equation}
\end{lemma}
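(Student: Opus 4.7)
The plan is to proceed by induction on $m$, exactly parallel to the way Lemma 3 and Lemma 4 were handled, using the product rule for Gateaux derivatives together with the fact that $\sigma_0$ is a constant vector.

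For the base case $m=1$, I would compute $[\phi\sigma_{0},\sigma_{0}]=(\phi\sigma_{0})'[\sigma_{0}]-\sigma_{0}'[\phi\sigma_{0}]$. Because $\sigma_{0}=\left(\tfrac12,\tfrac12\right)^{T}$ is a constant, $\sigma_{0}'[\cdot]=0$, and the identity $(\phi G)'[\sigma]=\phi'[\sigma]G+\phi G'[\sigma]$ from \eqref{b40} collapses to $(\phi\sigma_{0})'[\sigma_{0}]=\phi'[\sigma_{0}]\sigma_{0}=2H\sigma_{0}$ by \eqref{b42}. This matches the claimed formula $2\cdot 1\cdot\phi^{0}H\sigma_{0}$.

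For the inductive step, assume $[\phi^{l-1}\sigma_{0},\sigma_{0}]=2(l-1)\phi^{l-2}H\sigma_{0}$. Writing $\phi^{l}\sigma_{0}=\phi(\phi^{l-1}\sigma_{0})$ and applying \eqref{b40} together with $\sigma_{0}'[\cdot]=0$ gives
\begin{equation}\notag
[\phi^{l}\sigma_{0},\sigma_{0}]=\phi'[\sigma_{0}]\phi^{l-1}\sigma_{0}+\phi(\phi^{l-1}\sigma_{0})'[\sigma_{0}]=2H\phi^{l-1}\sigma_{0}+\phi[\phi^{l-1}\sigma_{0},\sigma_{0}].
\end{equation}
Substituting the inductive hypothesis on the right yields $2H\phi^{l-1}\sigma_{0}+2(l-1)\phi^{l-1}H\sigma_{0}$, and the induction closes provided the first summand can be rewritten as $2\phi^{l-1}H\sigma_{0}$, i.e.\ provided $H$ and $\phi$ commute as operators.

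The commutativity $H\phi=\phi H$ is the one nontrivial point, and I would verify it directly by block multiplication: both $H=\left(\begin{smallmatrix}1&\varepsilon\\ 1&1\end{smallmatrix}\right)$ and $\phi$ share the circulant block pattern $\left(\begin{smallmatrix}A&\varepsilon B\\ B&A\end{smallmatrix}\right)$ inherited from the $M(\cdot,\cdot)$ construction of \eqref{5}, and since the scalar entries of $H$ commute with the differential-integral entries $\partial^{2}+4u_{1}+2u_{1x}\partial^{-1}$ and $4u_{2}+2u_{2x}\partial^{-1}$ of $\phi$, a routine $2\times 2$ block check shows the two matrix operators commute. Once this is in place, the induction gives $[\phi^{l}\sigma_{0},\sigma_{0}]=2H\phi^{l-1}\sigma_{0}+2(l-1)\phi^{l-1}H\sigma_{0}=2l\,\phi^{l-1}H\sigma_{0}$, which is \eqref{b44} for $m=l$. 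The only real obstacle is the operator-level commutation $H\phi=\phi H$; everything else is a transparent consequence of the hereditary/strong-symmetry framework already established.
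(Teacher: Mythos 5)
Your proof is correct and follows essentially the same route as the paper: the same base case $[\phi\sigma_0,\sigma_0]=2H\sigma_0$ and the same induction via $(\phi G)'[\sigma]=\phi'[\sigma]G+\phi G'[\sigma]$ with $\phi'[\sigma_0]=2H$. The only difference is that you explicitly verify the block-circulant commutation $H\phi=\phi H$ needed to merge $2H\phi^{l-1}\sigma_0$ with $2(l-1)\phi^{l-1}H\sigma_0$, a step the paper uses tacitly.
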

\begin{proof}
When $m=1$, one has
\begin{equation}\notag
[\phi\sigma_0,\sigma_0]=(\phi\sigma_0)'\sigma_0=\left(\begin{matrix}
xu_{1x}+2u_1+\varepsilon xu_{2x}+2\varepsilon u_2\\
xu_{2x}+2u_2+xu_{1x}+2u_1
\end{matrix}
\right)'\left[\begin{matrix}
\frac{1}{2}\\
\frac{1}{2}\end{matrix}
\right]=\left(\begin{matrix}
1+\varepsilon\\
2\end{matrix}
\right)=2H\sigma_0.
\end{equation}
Assume
\begin{equation}\notag
[\phi^{l-1}\sigma_0,\sigma_0]=2(l-1)\phi^{l-2}H\sigma_0, \ \ l\geq3.
\end{equation}
It follows that
\begin{equation}\notag
\begin{split}
[\phi^l\sigma_0,\sigma_0]=&(\phi^l\sigma_0)'\sigma_0=\phi'[\sigma_0]\phi^{l-1}\sigma_0+\phi(\phi^{l-1}\sigma_0)'\sigma_0\\
=&2H\phi^{l-1}\sigma_0+\phi2(l-1)\phi^{l-2}H\sigma_0\\
=&2l\phi^{l-1}H\sigma_0.
\end{split}
\end{equation}
Therefore, \eqref{b44} holds.
\end{proof}

\begin{lemma}
Introducing the notation $F_{n,m}=(\phi^{n-1}\sigma_0)'[\phi^{m}\sigma_0]$, then
\begin{equation}\label{b45}
\phi'[\phi^{n-1}\sigma_0]\phi^{m-1}\sigma_0-F_{n,m}=\phi^{n-1}\phi'[\sigma_0]\phi^{m-1}\sigma_0-\phi F_{n,m-1}, \ \ n,m=1,2,\cdots.
\end{equation}
\end{lemma}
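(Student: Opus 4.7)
The plan is to prove Lemma 6 by induction on $n$, using the hereditary symmetry property from Lemma 1 and the product rule \eqref{b40} as the main tools.

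For the base case $n=1$, the identity reduces to
\begin{equation}\notag
\phi'[\sigma_0]\phi^{m-1}\sigma_0 - (\sigma_0)'[\phi^m\sigma_0] = \phi'[\sigma_0]\phi^{m-1}\sigma_0 - \phi(\sigma_0)'[\phi^{m-1}\sigma_0],
\end{equation}
which holds trivially because $\sigma_0=(1/2,1/2)^T$ is constant, so $(\sigma_0)'=0$ and both sides equal $\phi'[\sigma_0]\phi^{m-1}\sigma_0$.

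For the inductive step, assume the identity holds with $n$ replaced by $n-1$. I would first apply the product rule \eqref{b40} to $\phi^{n-1}\sigma_0 = \phi\cdot\phi^{n-2}\sigma_0$ in order to rewrite
\begin{equation}\notag
F_{n,m}=(\phi^{n-1}\sigma_0)'[\phi^{m}\sigma_0]=\phi'[\phi^{m}\sigma_0]\phi^{n-2}\sigma_0+\phi F_{n-1,m},
\end{equation}
and similarly $\phi F_{n,m-1}=\phi\phi'[\phi^{m-1}\sigma_0]\phi^{n-2}\sigma_0+\phi^2 F_{n-1,m-1}$. Next I would invoke the hereditary symmetry property \eqref{b37} with $f=\phi^{n-2}\sigma_0$ and $g=\phi^{m-1}\sigma_0$, giving
\begin{equation}\notag
\phi'[\phi^{n-1}\sigma_0]\phi^{m-1}\sigma_0=\phi'[\phi^{m}\sigma_0]\phi^{n-2}\sigma_0+\phi\bigl\{\phi'[\phi^{n-2}\sigma_0]\phi^{m-1}\sigma_0-\phi'[\phi^{m-1}\sigma_0]\phi^{n-2}\sigma_0\bigr\}.
\end{equation}

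Substituting these expansions into the LHS and RHS of \eqref{b45}, the terms $\phi'[\phi^{m}\sigma_0]\phi^{n-2}\sigma_0$ and $\phi\phi'[\phi^{m-1}\sigma_0]\phi^{n-2}\sigma_0$ cancel in pairs, and what remains is exactly $\phi$ applied to
\begin{equation}\notag
\phi'[\phi^{n-2}\sigma_0]\phi^{m-1}\sigma_0-F_{n-1,m}=\phi^{n-2}\phi'[\sigma_0]\phi^{m-1}\sigma_0-\phi F_{n-1,m-1},
\end{equation}
which is the induction hypothesis. This closes the induction.

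The main obstacle is purely the bookkeeping: correctly expanding $(\phi^{n-1}\sigma_0)'[\cdot]$ via \eqref{b40} and aligning the three terms produced by \eqref{b37} so that the non-hereditary pieces cancel and leave only $\phi$ times the $(n-1)$-indexed identity. Everything else is routine once these identities are written out side by side.
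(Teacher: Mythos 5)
Your proof is correct and rests on exactly the same two ingredients as the paper's — the product rule \eqref{b40} to peel one factor of $\phi$ off $F_{n,m}$ and the hereditary identity \eqref{b37} to rewrite $\phi'[\phi^{n-1}\sigma_0]\phi^{m-1}\sigma_0$ — so it is essentially the same argument; the paper merely unrolls the recursion as an iterated ``$=\cdots=$'' computation (after first expanding $F_{n,m}$ as a full sum), whereas you package the single recursive step as an induction on $n$ with the clean base case $(\sigma_0)'=0$. Your version is, if anything, the tidier write-up of the same idea.
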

\begin{proof}
Using \eqref{b40} repeatedly, one has
\begin{equation}\notag
\begin{split}
F_{n,m}=&\phi'[\phi^{m}\sigma_0]\phi^{n-2}\sigma_0+\phi F_{n-1,m}\\
=&\phi'[\phi^{m}\sigma_0]\phi^{n-2}\sigma_0+\phi\phi'[\phi^{m}\sigma_0]\phi^{n-3}\sigma_0+\phi^{2}F_{n-2,m}\\
=&\cdots\\
=&\sum\limits_{j=2}^n\phi^{j-2}\phi'[\phi^{m}\sigma_0]\phi^{n-j}\sigma_0.
\end{split}
\end{equation}
It follows that we obtain the following equation by using \eqref{b37} repeatedly
\begin{equation}\notag
\begin{split}
&\phi'[\phi^{n-1}\sigma_0]\phi^{m-1}\sigma_0-F_{n,m}=\phi'[\phi^{n-1}\sigma_0]\phi^{m-1}\sigma_0
-\phi'[\phi^{m}\sigma_0]\phi^{n-2}\sigma_0-\sum\limits_{j=3}^n\phi^{j-2}\phi'[\phi^{m}\sigma_0]\phi^{n-j}\sigma_0\\
&=\phi\phi'[\phi^{n-2}\sigma_0]\phi^{m-1}\sigma_0-\phi\phi'[\phi^{m-1}\sigma_0]\phi^{n-2}\sigma_0-\phi\phi'[\phi^{m}\sigma_0]\phi^{n-3}\sigma_0
-\sum\limits_{j=4}^n\phi^{j-2}\phi'[\phi^{m}\sigma_0]\phi^{n-j}\sigma_0\\
&=\cdots\\
&=-\phi F_{n,m-1}+\phi^{n-1}\phi'[\sigma_0]\phi^{m-1}\sigma_0.
\end{split}
\end{equation}
\end{proof}

\begin{lemma}
\begin{equation}\label{b46}
[\phi^m\sigma_0,\phi^n\sigma_0]=2(m-n)\phi^{m+n-1}H\sigma_0, \ \ m=1,2,\cdots,\ \ n=0,1,2,\cdots.
\end{equation}
\end{lemma}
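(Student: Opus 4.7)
The plan is to proceed by induction on $n$ with $m\ge 1$ fixed arbitrarily; the base case $n=0$ is exactly Lemma 5 (which gives $[\phi^{m}\sigma_{0},\sigma_{0}]=2m\,\phi^{m-1}H\sigma_{0}=2(m-0)\phi^{m+0-1}H\sigma_{0}$), so the whole content lies in extracting a useful recursion relating $S_{m,n}:=[\phi^{m}\sigma_{0},\phi^{n}\sigma_{0}]$ to $S_{m,n-1}$. In the notation $F_{a,b}=(\phi^{a-1}\sigma_{0})'[\phi^{b}\sigma_{0}]$ of Lemma 6 one has $S_{m,n}=F_{m+1,n}-F_{n+1,m}$.

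The heart of the proof is to expand these two $F$'s in two different ways and watch things cancel. First I would apply the Leibniz rule $(\phi\cdot\phi^{n-1}\sigma_{0})'[\phi^{m}\sigma_{0}]=\phi'[\phi^{m}\sigma_{0}]\phi^{n-1}\sigma_{0}+\phi(\phi^{n-1}\sigma_{0})'[\phi^{m}\sigma_{0}]$ to the second term, obtaining $F_{n+1,m}=\phi'[\phi^{m}\sigma_{0}]\phi^{n-1}\sigma_{0}+\phi F_{n,m}$. To the first term I would apply Lemma 6 with indices $(n,m)\mapsto(m+1,n)$, which yields
\[
F_{m+1,n}=\phi'[\phi^{m}\sigma_{0}]\phi^{n-1}\sigma_{0}-\phi^{m}\phi'[\sigma_{0}]\phi^{n-1}\sigma_{0}+\phi F_{m+1,n-1}.
\]
Subtracting, the awkward term $\phi'[\phi^{m}\sigma_{0}]\phi^{n-1}\sigma_{0}$ cancels and the remainder collapses into
\[
S_{m,n}=-\phi^{m}\phi'[\sigma_{0}]\phi^{n-1}\sigma_{0}+\phi\bigl(F_{m+1,n-1}-F_{n,m}\bigr)=-\phi^{m}\phi'[\sigma_{0}]\phi^{n-1}\sigma_{0}+\phi\,S_{m,n-1},
\]
since $F_{m+1,n-1}-F_{n,m}=(\phi^{m}\sigma_{0})'[\phi^{n-1}\sigma_{0}]-(\phi^{n-1}\sigma_{0})'[\phi^{m}\sigma_{0}]=S_{m,n-1}$.

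Next I would simplify the source term. From equation \eqref{b42} in the proof of Lemma 4 we already have the identity $\phi'[\sigma_{0}]=2H$. A direct $2\times 2$ block multiplication shows that $H$ and $\phi$ commute, because both have the Frobenius pattern $\bigl(\begin{smallmatrix}a&\varepsilon b\\ b&a\end{smallmatrix}\bigr)$ (with constant entries for $H$ and operator entries for $\phi$), and any two matrices of this shape commute. Consequently $\phi^{m}H\phi^{n-1}\sigma_{0}=\phi^{m+n-1}H\sigma_{0}$, and the recursion reduces to
\[
S_{m,n}=\phi\,S_{m,n-1}-2\,\phi^{m+n-1}H\sigma_{0}.
\]
Plugging in the inductive hypothesis $S_{m,n-1}=2(m-n+1)\phi^{m+n-2}H\sigma_{0}$ gives $\phi S_{m,n-1}=2(m-n+1)\phi^{m+n-1}H\sigma_{0}$, and subtracting $2\phi^{m+n-1}H\sigma_{0}$ yields precisely $S_{m,n}=2(m-n)\phi^{m+n-1}H\sigma_{0}$, which is \eqref{b46}.

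The principal obstacle is the first step, namely choosing the right asymmetric pairing — applying the plain Leibniz rule to one of $F_{m+1,n}$, $F_{n+1,m}$ and Lemma 6 to the other — so that the top-order term $\phi'[\phi^{m}\sigma_{0}]\phi^{n-1}\sigma_{0}$ cancels while the remaining pieces reassemble into $\phi S_{m,n-1}$ plus a single residual $\phi^{m}\phi'[\sigma_{0}]\phi^{n-1}\sigma_{0}$; once this telescoping is in place, the commutativity $H\phi=\phi H$ and Lemma 5 reduce the induction to elementary arithmetic.
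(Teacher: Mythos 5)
Your proof is correct and follows essentially the same route as the paper: an induction whose step is a ``peel off one $\phi$'' recurrence built from the Leibniz rule \eqref{b40} and Lemma~6, the only difference being that you induct on $n$ rather than on $m$ (the two recurrences are mirror images of one another under the antisymmetry of the commutator). In fact you supply more detail than the paper, which merely asserts that its recurrence \eqref{b47} ``can be obtained by means of \eqref{b37}, \eqref{b40} and \eqref{b45}''; your explicit cancellation of $\phi'[\phi^{m}\sigma_{0}]\phi^{n-1}\sigma_{0}$ and the observation that $H$ commutes with $\phi$ (used implicitly by the paper as well) fill exactly the steps the paper leaves to the reader.
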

\begin{proof}
From Lemma 5, one can find  that \eqref{b46} holds when $m=0,$ $n=0,1,2,\cdots.$
Assume \eqref{b46} holds for $m=l-1$, that is
\begin{equation}\notag
[\phi^{l-1}\sigma_0,\phi^n\sigma_0]=2(l-n-1)\phi^{l+n-2}H\sigma_0,\ \ l\geq2.
\end{equation}
In order to prove that \eqref{b46} holds, we need to prove that $[\phi^l\sigma_0,\phi^n\sigma_0]=2(l-n)\phi^{l+n-1}H\sigma_0$, which is equivalent to proving that the following recurrence relationship holds
\begin{equation}\label{b47}
\begin{split}
[\phi^l\sigma_0,\phi^n\sigma_0]=&\phi^n\phi'[\sigma_0]\phi^{l-1}\sigma_0+\phi[\phi^{l-1}\sigma_0,\phi^n\sigma_0]\\
=&\phi^n2H\phi^{l-1}\sigma_0+\phi2(l-n-1)\phi^{l+n-2}H\sigma_0=2(l-n)\phi^{l+n-1}H\sigma_0.
\end{split}
\end{equation}
Actually, the  recurrence relationship \eqref{b47} can be obtained by means of \eqref{b37}, \eqref{b40} and \eqref{b45}.
\end{proof}

According to the coupled nonisospectral KdV hierarchy \eqref{b33}, \eqref{b36} and the above lemmas, we let
\begin{equation}\label{b48}
\tau_0^m=(2m+1)tHK_{m-1}+\sigma_0, \ \ m=1,2,\cdots,
\end{equation}
\begin{equation}\label{b49}
\tau_n^m=\phi^n\tau_0^m=(2m+1)tHK_{m+n-1}+\phi^n\sigma_0, \ \ n=0,1,2,\cdots.
\end{equation}
Then, the $\tau$ symmetries of the coupled nonisospectral KdV hierarchy can be deduced.
\begin{theorem}
$\tau_n^m$ are symmetries of the coupled nonisospectral KdV hierarchy \eqref{b33}, that means
\begin{equation}\label{b50}
(\tau_n^m)_t=K_m'[\tau_n^m], \ \ m=1,2,\cdots,\ \ n=0,1,2,\cdots.
\end{equation}
\end{theorem}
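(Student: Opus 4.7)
The plan is to reduce the symmetry identity to two simple commutator statements, both of which have been (or can be) established above. Writing $\tau_n^m = tS + X$ with $S := (2m+1)HK_{m+n-1}$ and $X := \phi^n\sigma_0$, observe that $S$ and $X$ depend on $u$ but carry no explicit $t$. For any such decomposition, the symmetry condition $(\tau)_t = K_m'[\tau]$ along the flow $u_t = K_m$ is equivalent to the pair
\begin{equation}\notag
[K_m,X] = S, \qquad [K_m,S] = 0.
\end{equation}
This is because $(\tau_n^m)_t = S + tS'[K_m] + X'[K_m]$, while $K_m'[\tau_n^m] = tK_m'[S] + K_m'[X]$, and rearranging gives the two conditions.

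The first condition, $[K_m,\phi^n\sigma_0] = (2m+1)HK_{m+n-1}$, is exactly Lemma 4. Hence it remains to verify the second condition, $[K_m, HK_{m+n-1}] = 0$. The idea is to push $H$ through the Gateaux derivative: since $H$ is a constant matrix, $(HK_{m+n-1})'[K_m] = HK_{m+n-1}'[K_m]$, so
\begin{equation}\notag
[K_m, HK_{m+n-1}] = K_m'[HK_{m+n-1}] - HK_{m+n-1}'[K_m].
\end{equation}
Using the Abelian property from Theorem 1, $K_{m+n-1}'[K_m] = K_m'[K_{m+n-1}]$, so the bracket collapses to $K_m'[HK_{m+n-1}] - HK_m'[K_{m+n-1}]$. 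Thus I need the commutation identity $K_m'[Hv] = HK_m'[v]$ for any smooth vector field $v$.

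To establish $K_m'[Hv] = HK_m'[v]$, I would exploit the block structure carefully. Both $\phi$ and its Gateaux derivative $\phi'[v]$ have the block form $\begin{pmatrix}A & \varepsilon B\\ B & A\end{pmatrix}$, and a short direct computation shows that any such block operator commutes with $H = \begin{pmatrix}1 & \varepsilon\\ 1 & 1\end{pmatrix}$. A component-wise check also gives $\phi'[Hv] = \phi'[v]H = H\phi'[v]$. Decomposing $K_m = \phi^m K_0$ yields
\begin{equation}\notag
K_m'[v] = (\phi^m)'[v]K_0 + \phi^m K_0'[v], \qquad (\phi^m)'[v] = \sum_{j=0}^{m-1}\phi^j\phi'[v]\phi^{m-1-j},
\end{equation}
and substituting $Hv$ for $v$ lets me pull $H$ past each factor of $\phi$ and $\phi'[v]$; together with $K_0'[Hv] = \partial(Hv) = H\partial v = HK_0'[v]$ this gives the desired identity.

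The main obstacle is this commutation $K_m'[Hv] = HK_m'[v]$; everything else is routine bracket bookkeeping. Once this is in hand, $[K_m, HK_{m+n-1}] = H[K_m, K_{m+n-1}] = 0$ by Theorem 1, and combining with Lemma 4 completes the proof that $\tau_n^m$ satisfies $(\tau_n^m)_t = K_m'[\tau_n^m]$.
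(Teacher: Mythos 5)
Your proposal is correct, and it takes a mildly different route from the paper's. The paper verifies the case $n=0$ by directly computing $K_m'[\tau_0^m]$ (using Lemma~3 for the $\sigma_0$ part) and then passes to general $n$ by applying $\phi^n$ and invoking the strong-symmetry property of $\phi$ (Lemma~2). You instead separate the identity into its $t$-linear and $t$-free parts uniformly in $n$, reducing the claim to $[K_m,\phi^n\sigma_0]=(2m+1)HK_{m+n-1}$ (Lemma~4) together with $[K_m,HK_{m+n-1}]=0$. In the end the two routes rest on the same ingredients, since Lemma~4 is itself proved by exactly that $\phi$-lifting; what your version buys is that it isolates and actually justifies the step $K_m'[Hv]=HK_m'[v]$, which the paper uses tacitly when it rewrites $K_m'\bigl[(2m+1)tHK_{m-1}\bigr]$ as $(2m+1)tHK_{m-1}'[K_m]$. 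Your block-structure argument for that commutation is sound: $\phi$, $\phi'[v]$, $K_0'=\partial$, and hence every $K_m'$, are operators of the form $\left(\begin{smallmatrix}A&\varepsilon B\\ B&A\end{smallmatrix}\right)$ with scalar (integro-)differential entries, all of which commute with $H$, and one checks directly that $\phi'[Hv]=H\phi'[v]$; combined with Theorem~1 this gives $[K_m,HK_{m+n-1}]=H[K_m,K_{m+n-1}]=0$. So your write-up is, if anything, more complete than the paper's at this point.
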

\begin{proof}
From \eqref{b39} and \eqref{b41}, one has
\begin{equation}
\begin{split}
K_m'[\tau_0^m]=&K_m'((2m+1)tHK_{m-1}+\sigma_0)\\
=&(2m+1)tHK_{m-1}'[K^m]+(2m+1)HK_{m-1}\\
=&(2m+1)tHK_{m-1,t}+(2m+1)HK_{m-1}\\
=&(\tau_0^m)_t.
\end{split}
\end{equation}
Hence,  \eqref{b50} holds for $n=0.$ We conclude that \eqref{b50} holds  because $\phi$ is a strong symmetry of \eqref{b33}.
\end{proof}

\begin{theorem}
\begin{equation}\label{b51}
[K_m,\tau_n^l]=(2m+1)HK_{m+n-1}, \ \ l,m=1,2,\cdots,\ \ n=0,1,2,\cdots.
\end{equation}
\end{theorem}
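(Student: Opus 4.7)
The plan is a direct computation that reduces the bracket to two previously established identities. By the definition \eqref{b49}, $\tau_n^l = (2l+1)tHK_{l+n-1} + \phi^n\sigma_0$, so by bilinearity of the Lie bracket (with the constant matrix $H$ and the scalar $t$ pulled outside, since $K_m$ is independent of $t$ and the Gateaux derivative in the $u$-direction is linear in such coefficients), I would write
\begin{equation}
[K_m, \tau_n^l] = (2l+1)\,tH\,[K_m, K_{l+n-1}] + [K_m, \phi^n\sigma_0].
\end{equation}

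The first step after this decomposition is to invoke Theorem 1 (equation \eqref{b39}), which guarantees $[K_m, K_{l+n-1}] = 0$ for all admissible indices. Thus the entire $t$-dependent piece drops out, which in particular explains why the right-hand side of \eqref{b51} carries no dependence on $l$. The second step is to apply Lemma 4 (equation \eqref{b43}) directly to the remaining term, yielding $[K_m, \phi^n\sigma_0] = (2m+1)HK_{m+n-1}$. Combining these two facts establishes \eqref{b51}.

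The only point requiring care — and I expect this to be the main, though mild, obstacle — is the justification that the factor $(2l+1)tH$ can be pulled outside the bracket. This uses that the bracket here is the standard Gateaux commutator with respect to $u$ (with $t$ regarded as a parameter), so that $K_m'[tHK_{l+n-1}] = tH\,K_m'[K_{l+n-1}]$ and $(tHK_{l+n-1})'[K_m] = tH\,K_{l+n-1}'[K_m]$. Once this is in place, the computation is essentially a two-line application of Theorem 1 and Lemma 4, with no new calculations required.
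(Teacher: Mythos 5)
Your proposal is correct and follows essentially the same route as the paper: the authors likewise expand $\tau_n^l$ via \eqref{b49}, drop the $t$-dependent term using \eqref{b39}, and conclude with \eqref{b43}. Your extra remark justifying why the factor $(2l+1)tH$ passes through the Gateaux bracket is a detail the paper leaves implicit, but it does not change the argument.
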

\begin{proof}
By means of \eqref{b39}, \eqref{b43} and \eqref{b49}, we have
\begin{equation}\notag
[K_m,\tau_n^l]=[K_m,(2l+1)tHK_{l+n-1}+\phi^n\sigma_0]=[K_m,\phi^n\sigma_0]=(2m+1)HK_{m+n-1}.
\end{equation}
\end{proof}

\begin{theorem}
\begin{equation}\label{b52}
[\tau_l^m,\tau_n^m]=2(l-n)H\tau_{l+n-1}^m, \ \ m=1,2,\cdots,\ \ l,n=0,1,2,\cdots.
\end{equation}
\end{theorem}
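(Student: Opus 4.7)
The plan is to prove \eqref{b52} by directly expanding $[\tau_l^m,\tau_n^m]$ via the defining formula $\tau_k^m=(2m+1)tHK_{m+k-1}+\phi^k\sigma_0$ and the bilinearity of the Lie bracket, then separating the result by powers of $t$. The three resulting pieces should match the two terms in $2(l-n)H\tau_{l+n-1}^m=2(l-n)(2m+1)tH^2K_{m+l+n-2}+2(l-n)H\phi^{l+n-1}\sigma_0$, with the $t^2$ contribution vanishing.

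The key preliminary observation is that the constant matrix $H$ commutes with $\phi$ and is compatible with its Gateaux derivative. A direct block computation on the form of $\phi$ in \eqref{b36} yields $\phi H=H\phi$; differentiating gives $\phi'[Z]H=H\phi'[Z]$ for any test function $Z$, and a parallel explicit calculation on $\phi'[Z]$ produces the slightly subtler identity $\phi'[HZ]=\phi'[Z]H=H\phi'[Z]$. Iterating the Leibniz rule then propagates this to $(\phi^k\sigma_0)'[HZ]=H(\phi^k\sigma_0)'[Z]$, which is the precise tool needed to pull $H$ past the Gateaux derivatives appearing in the cross terms.

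With this in hand, I would decompose the bracket. The $t^2$ term $(2m+1)^2t^2[HK_{m+l-1},HK_{m+n-1}]$ vanishes: writing $HK_k=\phi^kHK_0$ via $\phi H=H\phi$ and noting that $HK_0$ lies in the commuting hierarchy \eqref{b32}, so the twisted flows still pairwise commute by \eqref{b39}. For the $t$-linear contribution
\[
(2m+1)t\,\bigl\{[HK_{m+l-1},\phi^n\sigma_0]+[\phi^l\sigma_0,HK_{m+n-1}]\bigr\},
\]
the identity $(\phi^k\sigma_0)'[HZ]=H(\phi^k\sigma_0)'[Z]$ lets me factor $H$ out of each bracket, reducing them respectively to $H[K_{m+l-1},\phi^n\sigma_0]$ and $-H[K_{m+n-1},\phi^l\sigma_0]$. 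Lemma \eqref{b43} then evaluates these as $(2m+2l-1)H^2K_{m+l+n-2}$ and $-(2m+2n-1)H^2K_{m+l+n-2}$, whose sum is exactly $2(l-n)H^2K_{m+l+n-2}$. Finally, the $t$-independent piece $[\phi^l\sigma_0,\phi^n\sigma_0]$ equals $2(l-n)\phi^{l+n-1}H\sigma_0=2(l-n)H\phi^{l+n-1}\sigma_0$ by Lemma \eqref{b46} combined with $\phi H=H\phi$.

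Assembling all three pieces produces
\[
[\tau_l^m,\tau_n^m]=2(l-n)\bigl\{(2m+1)tH^2K_{m+l+n-2}+H\phi^{l+n-1}\sigma_0\bigr\}=2(l-n)H\tau_{l+n-1}^m,
\]
as required. The main obstacle I anticipate is the Gateaux derivative bookkeeping in the $t$-linear step: verifying carefully that substituting $HZ$ into $\phi'$ coincides with multiplying the matrix $\phi'[Z]$ by $H$, and then propagating this through $(\phi^n\sigma_0)'$ via the Leibniz rule so that the previously-established Lemma \eqref{b43} can be applied cleanly to recover the stated coefficient $2(l-n)$.
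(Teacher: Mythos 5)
Your proposal follows essentially the same route as the paper's own proof: expand $[\tau_l^m,\tau_n^m]$ bilinearly, drop the $t^2$ term via the commuting flows \eqref{b39}, evaluate the two $t$-linear cross terms with Lemma 4 (eq.\ \eqref{b43}) to get $(2m+2l-1)$ and $-(2m+2n-1)$ times $H^2K_{l+m+n-2}$, and handle the $t$-independent piece with Lemma 7 (eq.\ \eqref{b46}). The only difference is that you explicitly verify $\phi H=H\phi$ and the compatibility of $H$ with the Gateaux derivatives, facts the paper uses silently when factoring $H$ out of the brackets; this is a welcome extra check, not a deviation.
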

\begin{proof}
From \eqref{b39}, \eqref{b43}, \eqref{b46} and \eqref{b49}, we obtain
\begin{equation}\notag
\begin{split}
[\tau_l^m,\tau_n^m]=&[(2m+1)tHK_{l+m-1}+\phi^l\sigma_0,(2m+1)tHK_{n+m-1}+\phi^n\sigma_0]\\
=&(2m+1)tH[K_{l+m-1},\phi^n\sigma_0]-(2m+1)tH[K_{n+m-1},\phi^l\sigma_0]+[\phi^l\tau_0,\phi^n\sigma_0]\\
=&(2m+1)tH((2m+2l-1)HK_{l+m+n-2}-(2m+2n-1)HK_{l+m+n-2})+2(l-n)\phi^{l+n-1}H\sigma_0\\
=&2(l-n)H((2m+1)tHK_{l+m+n-2}+\phi^{l+n-1}\sigma_0)\\
=&2(l-n)H\tau_{l+n-1}^m.
\end{split}
\end{equation}
\end{proof}

\begin{theorem}
From theorem 1, theorem 3 and theorem 4, we find that the $K$ symmetries and $\tau$ symmetries  of  \eqref{b33} constitute a set of infinite dimensional Lie algebras of the following structure:
\begin{equation}\notag
[K_m,K_n]=0, \ \  [K_m,\tau_n^l]=(2m+1)HK_{m+n-1},\ \ [\tau_l^m,\tau_n^m]=2(l-n)H\tau_{l+n-1}^m.
\end{equation}
\end{theorem}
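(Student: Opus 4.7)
The claim is essentially a synthesis of the three commutator identities already established, so my plan is not to start fresh calculations but to assemble the previously proved results and verify the Lie-algebra axioms.

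First I would observe that the three stated bracket relations are literally the content of \eqref{b39}, \eqref{b51} and \eqref{b52}, so no further commutator computation is needed. The only work is to argue that the vector space spanned by $\{K_m\}_{m\geq 0}$ and $\{\tau_n^m\}_{m\geq 1,\,n\geq 0}$ together with the bracket $[X,Y]=X'[Y]-Y'[X]$ actually constitutes a Lie algebra.

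Next, for the Lie algebra structure I would check the three required properties in turn. Bilinearity is immediate from bilinearity of the Gateaux derivative in each slot. Antisymmetry is built into the definition $[X,Y]=X'[Y]-Y'[X]$. For the Jacobi identity, I would invoke the general fact that the bracket defined this way on smooth vector fields (functional derivatives on jet space) automatically satisfies Jacobi, since it is the commutator of derivations on the algebra of smooth functionals; this is the standard reason $K$- and $\tau$-symmetries of integrable hierarchies form Lie algebras. Closure is then the only substantive point, and it follows directly from the three displayed identities: bracketing $K_m$ with $K_n$ gives zero (hence in the span), bracketing $K_m$ with $\tau_n^l$ returns a multiple of $K_{m+n-1}$ (again in the span, with the matrix $H$ acting as a constant coefficient), and bracketing $\tau_l^m$ with $\tau_n^m$ returns a multiple of $\tau_{l+n-1}^m$.

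The only point that requires a brief comment is the role of the constant matrix $H$ appearing on the right-hand sides. Since $H$ is a fixed $2\times 2$ numerical matrix and our symmetries are two-component vector fields, $HK_{m+n-1}$ and $H\tau_{l+n-1}^m$ are again valid elements of the symmetry algebra (they are $\mathbb{R}$-linear combinations of the components of the generators, which remain symmetries by linearity of the linearised equation $\sigma_t = K_m'[\sigma]$). Thus the span is stable under the bracket, and the displayed relations furnish the complete set of structure constants, concluding that $\{K_m, \tau_n^m\}$ is an infinite-dimensional Lie algebra with the asserted structure. I do not anticipate any genuine obstacle, since all the analytic content has already been handled in Theorems 1, 3 and 4; the present theorem is an assembly step.
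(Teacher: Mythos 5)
Your proposal is correct and takes essentially the same route as the paper, which supplies no separate proof for this theorem: it simply reads the three bracket relations off Theorem 1 (eq.~\eqref{b39}), Theorem 3 (eq.~\eqref{b51}) and Theorem 4 (eq.~\eqref{b52}) and declares the resulting structure an infinite-dimensional Lie algebra. Your extra verification of bilinearity, antisymmetry, Jacobi and closure under the constant matrix $H$ (which genuinely preserves the symmetry algebra here because $H$, $\phi$ and the operators $K_m'$ all share the same block form and hence commute appropriately) only makes the argument more complete than the paper's.
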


In the following, we consider some conserved quantities  of the coupled nonisospectral KdV hierarchy \eqref{b33}. Let us first recall some basic notions and definitions (see \cite{207,208,209}).
\begin{definition}
  For a given  integrable hierarchy $u_t=K_n(u)$, $\nu$ is called the conserved covariance when it satisfies
\begin{equation}\label{b53}
\frac{d\nu}{dt}+K'^{\ast}\nu=0
\end{equation}
where $K'$ denotes the linearized operator of $K$, and $K'^{\ast}$ is a conjugate operator of $K'$.
\end{definition}
\begin{proposition}
  For a given  integrable hierarchy $u_t=K_n(u)$, if $\sigma$ is its symmetry and $\nu$ is the conserved covariance, then
\begin{equation}\notag
\int_{-\infty}^{\infty}\nu\sigma dx=<\nu,\sigma>.
\end{equation}
Actually, one has $\frac{d}{dt}<\nu,\sigma>=0$ since $<\nu,\sigma>$ is independent of time $t$.
\end{proposition}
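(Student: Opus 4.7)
The plan is to differentiate the pairing $\langle \nu,\sigma\rangle = \int_{-\infty}^{\infty} \nu\,\sigma\, dx$ with respect to $t$ and show both contributions cancel thanks to the defining relations of a symmetry and of a conserved covariance. Concretely, I would first record the two hypotheses in their usable form: because $\sigma$ is a symmetry of $u_t=K_n(u)$ it satisfies the linearised equation $\sigma_t = K'[\sigma]$, while because $\nu$ is a conserved covariance it satisfies $\nu_t = -K'^{\ast}\nu$ by Definition~2 (equation \eqref{b53}).

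Next I would apply the Leibniz rule under the integral sign (pulling $\partial_t$ inside is legitimate under the standard Schwartz-type decay assumption at $x=\pm\infty$ that is implicit throughout the paper) to obtain
\begin{equation}\notag
\frac{d}{dt}\langle \nu,\sigma\rangle
=\int_{-\infty}^{\infty}\nu_t\,\sigma\,dx+\int_{-\infty}^{\infty}\nu\,\sigma_t\,dx
=\langle \nu_t,\sigma\rangle+\langle \nu,\sigma_t\rangle.
\end{equation}
Substituting the two equations above gives
\begin{equation}\notag
\frac{d}{dt}\langle \nu,\sigma\rangle
=-\langle K'^{\ast}\nu,\sigma\rangle+\langle \nu,K'[\sigma]\rangle.
\end{equation}
Finally, the very definition of the formal adjoint with respect to the $L^2$ pairing $\langle\cdot,\cdot\rangle$ yields $\langle K'^{\ast}\nu,\sigma\rangle=\langle \nu,K'[\sigma]\rangle$, so the two terms cancel and $\tfrac{d}{dt}\langle \nu,\sigma\rangle=0$.

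The only real obstacle is a bookkeeping one rather than a conceptual one: justifying the passage from $\langle K'^{\ast}\nu,\sigma\rangle$ to $\langle \nu,K'[\sigma]\rangle$ requires all boundary terms produced by the integrations by parts (which can involve derivatives of arbitrarily high order when $K'$ contains $\partial^k$) to vanish at $x=\pm\infty$. I would therefore state explicitly the standing assumption that $u$, $\sigma$, $\nu$ and all their $x$-derivatives decay sufficiently fast at infinity, which is the usual framework under which the trace-identity and Hamiltonian calculations earlier in the paper are carried out. Under that assumption the argument above is complete, and as the author remarks, the pairing $\langle \nu,\sigma\rangle$ is then a constant of motion of the flow $u_t=K_n(u)$.
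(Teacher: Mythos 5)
Your argument is correct and is the standard one: differentiating the pairing along the flow, substituting $\sigma_t=K_n'[\sigma]$ and $\nu_t=-K_n'^{\ast}\nu$ from Definition~1, and cancelling via the adjoint identity $\langle K_n'^{\ast}\nu,\sigma\rangle=\langle\nu,K_n'[\sigma]\rangle$ under the usual decay assumptions. The paper itself offers no proof of this proposition --- it is recalled from the cited literature, and its stated justification (``since $\langle\nu,\sigma\rangle$ is independent of time $t$'') is essentially a restatement of the conclusion --- so your computation supplies exactly the missing content, and your explicit flagging of the vanishing boundary terms in the integrations by parts is the right caveat to record.
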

\begin{definition}
If $F'f=<\nu,f>$,  for $\forall f\in S$, then $\nu$ is called the gradient of the functional $F$, that means $\nu=\frac{\delta F}{\delta u}$.
\end{definition}
\begin{proposition}
   If $\nu'=\nu'^{\ast}$, then $\nu$ is the gradient of the following functional
\begin{equation}\label{b54}
F=\int_{0}^{1}<\nu(\lambda u),u>d\lambda.
\end{equation}
\end{proposition}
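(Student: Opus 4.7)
The plan is to verify directly that the Gateaux derivative $F'[f]$ agrees with $\langle \nu, f\rangle$ for all admissible $f$, since that is exactly the definition of $\nu$ being the gradient of $F$ given in Definition above. The symmetry hypothesis $\nu' = \nu'^{\ast}$ will be the algebraic ingredient that makes the integrand telescope.

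First I would differentiate under the integral sign. Writing $F(u) = \int_0^1 \langle \nu(\lambda u), u \rangle\, d\lambda$, the Leibniz rule and chain rule give
\begin{equation}\notag
F'(u)[f] = \int_0^1 \bigl( \lambda \langle \nu'(\lambda u)[f], u \rangle + \langle \nu(\lambda u), f \rangle \bigr)\, d\lambda.
\end{equation}
Next I would move the operator $\nu'(\lambda u)$ across the inner product via its adjoint, and then apply the hypothesis $\nu'^{\ast} = \nu'$:
\begin{equation}\notag
\lambda \langle \nu'(\lambda u)[f], u\rangle = \lambda \langle f, \nu'^{\ast}(\lambda u)[u]\rangle = \lambda \langle \nu'(\lambda u)[u], f\rangle.
\end{equation}

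The key observation at this point is that $\frac{d}{d\lambda}\nu(\lambda u) = \nu'(\lambda u)[u]$, so that
\begin{equation}\notag
\frac{d}{d\lambda}\bigl( \lambda\, \nu(\lambda u)\bigr) = \nu(\lambda u) + \lambda\, \nu'(\lambda u)[u].
\end{equation}
Substituting back, the integrand in $F'(u)[f]$ collapses into $\langle \frac{d}{d\lambda}(\lambda\, \nu(\lambda u)), f\rangle$, and by the fundamental theorem of calculus
\begin{equation}\notag
F'(u)[f] = \bigl[\, \lambda\, \langle \nu(\lambda u), f\rangle\, \bigr]_{\lambda = 0}^{\lambda = 1} = \langle \nu(u), f\rangle.
\end{equation}
By Definition above, this is exactly the statement $\nu = \delta F/\delta u$, which completes the proof.

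The only genuine subtlety, and the step I would be most careful about, is the adjoint move $\langle \nu'(\lambda u)[f], u\rangle = \langle f, \nu'^{\ast}(\lambda u)[u]\rangle$: this requires that boundary terms produced by integration by parts vanish (i.e.\ that $u$ and $f$ lie in the Schwartz-type class $S$ for which $\langle \cdot,\cdot\rangle$ is a well-defined pairing on which the formal adjoint acts), and it also implicitly requires that $\nu$ be smooth enough that the Gateaux derivative $\nu'(\lambda u)$ exists and depends continuously on $\lambda$ so that the interchange of $d/du$ and $\int_0^1 d\lambda$ is legitimate. Once these mild regularity assumptions are granted, the identity $\nu' = \nu'^{\ast}$ is all that is needed and the rest is a one-line telescoping.
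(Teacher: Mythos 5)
Your argument is correct and is the standard homotopy (Volterra/Vainberg) proof of this classical fact: differentiate under the integral, use $\nu'=\nu'^{\ast}$ together with the symmetry of the pairing to turn $\lambda\langle\nu'(\lambda u)[f],u\rangle$ into $\lambda\langle\nu'(\lambda u)[u],f\rangle=\langle\lambda\frac{d}{d\lambda}\nu(\lambda u),f\rangle$, and telescope. For comparison: the paper itself supplies no proof of this proposition at all; it is merely recalled as known background from the cited literature on symmetries and conserved covariants, so there is no alternative route in the paper to weigh against yours. Your closing remarks about the adjoint step and the interchange of differentiation and integration correctly identify the only places where regularity assumptions (functions in the Schwartz-type class $S$, continuity of $\nu'(\lambda u)$ in $\lambda$) are actually used, which is more care than the paper takes.
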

\begin{proposition}
 If $I$ is a conserved quantity of the hierarchy $u_t=K_n(u)$, and the conserved covariance $\nu$ satisfies
\begin{equation}\notag
I'K_n=<\nu,K_n>,
\end{equation}
then
\begin{equation}\notag
\frac{\partial I}{\partial t}+<\nu,K_n>=0,
\end{equation}
that is
\begin{equation}\notag
\frac{\partial \nu}{\partial t}+K_n'^{\ast}\nu+\nu'K_n=0.
\end{equation}
Therefore, the  conserved quantities associated with the integrable hierarchy $u_t=K_n(u)$ are derived as follows:
\begin{equation}\label{b55}
I_m=\int_{0}^{1}<\partial_x^{-1}K_m(\lambda u),u>d\lambda.
\end{equation}
\end{proposition}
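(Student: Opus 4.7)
The argument splits naturally into the three assertions of the proposition, which I would address in turn.

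First, I would verify the identity $\partial_t I + \langle \nu, K_n\rangle = 0$ by a direct chain-rule computation. Since $I$ is conserved along the flow $u_t = K_n(u)$, its total time derivative vanishes, so $0 = \frac{dI}{dt} = \partial_t I + I'[u_t] = \partial_t I + I'[K_n]$. Substituting the hypothesis $I'K_n = \langle \nu, K_n\rangle$ gives the claim immediately. (Note that if $\nu = \delta I/\delta u$ as in Definition 2, this identity is automatic.)

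Second, I would pass to the covariance equation by applying the variational derivative $\delta/\delta u$ to both sides of the preceding identity. Commuting $\delta/\delta u$ with $\partial_t$ turns $\partial_t I$ into $\partial_t \nu$. For the pairing $\langle \nu, K_n\rangle$ the variational product rule yields $K_n'^{\ast}\nu + (\nu')^{\ast} K_n$; because $\nu$ is a gradient in the sense of Definition 2, the self-adjointness $\nu' = (\nu')^{\ast}$ supplied by Proposition 2 lets one rewrite the second summand as $\nu' K_n$. Collecting these terms produces the stated equation $\partial_t \nu + K_n'^{\ast}\nu + \nu' K_n = 0$.

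Third, to derive the explicit conserved quantities $I_m$, I would take the candidate covariance $\nu_m := \partial_x^{-1} K_m(u)$. Its gradient character is inherited from the bi-Hamiltonian structure established in \eqref{24}--\eqref{25}: since $K_m = J\,\delta H_{m+2}/\delta u$ with $J = \frac{1}{2}\partial$, we have $\partial_x^{-1} K_m = \tfrac{1}{2}\,\delta H_{m+2}/\delta u$, so the self-adjointness condition $\nu_m' = (\nu_m')^{\ast}$ holds. Proposition 2 then reconstructs the generating functional via the homotopy integral, giving exactly $I_m = \int_0^1 \langle \partial_x^{-1} K_m(\lambda u), u\rangle\, d\lambda$. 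Conservation of $I_m$ under any flow in the hierarchy then follows from Proposition 1 together with the commutativity $[K_m,K_n]=0$ already established in Theorem 1.

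The main obstacle is the bookkeeping in the second step: one must deploy the variational product rule and the self-adjointness of $\nu'$ with care in order to land precisely on the symmetric form $\partial_t \nu + K_n'^{\ast}\nu + \nu' K_n = 0$ rather than on a variant still carrying $(\nu')^{\ast}$. The first step is little more than the chain rule, and the third step is a routine application of the homotopy formula once the gradient property of $\partial_x^{-1} K_m$ has been secured from the Hamiltonian structure.
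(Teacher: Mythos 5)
The paper does not actually prove this proposition: it is stated, together with Definitions 1--2 and Propositions 1--2, as background recalled from \cite{207,208,209}, so there is no in-paper argument to compare against. Your reconstruction of the first two assertions is correct and is the standard one: the chain rule $0=\frac{dI}{dt}=\partial_t I+I'[K_n]$ plus the hypothesis gives the first identity, and applying $\frac{\delta}{\delta u}$ to the pairing, $\frac{\delta}{\delta u}\langle\nu,K_n\rangle=(K_n')^{\ast}\nu+(\nu')^{\ast}K_n$, together with the self-adjointness $\nu'=(\nu')^{\ast}$ of a gradient, lands exactly on $\partial_t\nu+K_n'^{\ast}\nu+\nu'K_n=0$. (An even shorter route to the third displayed equation is to observe that it is literally Definition 1 with $\frac{d\nu}{dt}=\partial_t\nu+\nu'[u_t]$ unpacked along $u_t=K_n$; your variational-derivative route is equivalent.)

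The one place where your argument needs care is the gradient property of $\nu_m=\partial_x^{-1}K_m$. Your justification via \eqref{24}--\eqref{25} is valid for the scalar hierarchy of Section~3, where $J=\frac{1}{2}\partial$ and hence $\partial_x^{-1}K_m=\frac{1}{2}\frac{\delta H_{m+2}}{\delta u}$. But the propositions are invoked in Section~5 for the coupled hierarchy \eqref{b33}, whose Hamiltonian operators are $J_1=\frac{1}{2}\mathrm{diag}(\partial,\tfrac{1}{\varepsilon}\partial)$ and $J_2$, neither of which is a scalar multiple of $\partial$. There $\partial_x^{-1}K_m$ is not a constant multiple of $\frac{\delta H_{1,m+2}}{\delta u}$ or $\frac{\delta H_{2,m+2}}{\delta u}$, and in fact $(\partial_x^{-1}K_1)'=\bigl(\begin{smallmatrix}\partial^2+6u_1&6\varepsilon u_2\\ 6u_2&\partial^2+6u_1\end{smallmatrix}\bigr)$ fails to be self-adjoint unless $\varepsilon=1$, so the appeal to Proposition~2 does not go through verbatim in the coupled setting. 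This is arguably a defect of the paper's own recipe \eqref{b55} rather than of your argument, but if you intend the third step to cover the case in which the proposition is actually used, you must either verify the symmetry of $(\partial_x^{-1}K_m)'$ directly or replace $\partial_x^{-1}$ by the appropriate inverse Hamiltonian operator $J_1^{-1}$ (up to normalization), after which the homotopy formula and the involutivity \eqref{b35} (equivalently $[K_m,K_n]=0$ from Theorem~1) give conservation as you say.
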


Based on the above definitions and propositions, we obtain  a few conserved qualities of the hierarchy \eqref{b33}
as follows:

\begin{equation}\notag
\begin{split}
I_0=\int_{0}^{1}<[\partial_x^{-1}K_0(\lambda \left(\begin{matrix}
u_1\\
u_2\end{matrix}
\right))]^T,\left(\begin{matrix}
u_1\\
u_2\end{matrix}
\right)>d\lambda
=\int_{0}^{1}\int_{-\infty}^{\infty}\lambda(u_1^2+u_2^2)dxd\lambda
=\frac{1}{2}\int_{-\infty}^{\infty}(u_1^2+u_2^2)dx,
\end{split}
\end{equation}
\begin{equation}\notag
\begin{split}
I_1=&\int_{0}^{1}<\partial_x^{-1}K_1(\lambda \left(\begin{matrix}
u_1\\
u_2\end{matrix}
\right)),\left(\begin{matrix}
u_1\\
u_2\end{matrix}
\right)>d\lambda=\int_{-\infty}^{\infty}(\frac{1}{2}u_1u_{1,xx}+\frac{1}{2}u_2u_{2,xx}+u_1^3+\varepsilon u_1u_2^2+2u_1u_2^2)dx\\
=&\int_{-\infty}^{\infty}(-\frac{1}{2}u_{1x}^2-\frac{1}{2}u_{2x}^2+u_1^3+\varepsilon u_1u_2^2+2u_1u_2^2)dx,
\end{split}
\end{equation}
\begin{equation}\notag
\cdots
\end{equation}
where $K_0$ and $K_1$ are given by \eqref{b36}.

\section{A multi-component nonisospectral KdV hierarchy}

For the  Lie algebra $A_{1N}$ (see eq.\eqref{6}), we consider the corresponding loop algebra
\begin{equation}\notag
\widetilde{A}_{1N}=\rm span\{\widetilde{h}_1(n), \widetilde{h}_2(n), \widetilde{h}_3(n),\cdots, \widetilde{h}_{3N-2}(n), \widetilde{h}_{3N-1}(n), \widetilde{h}_{3N}(n)\}
\end{equation}
where
\begin{equation}\notag
\widetilde{h}_i(n)=\widetilde{h}_i\lambda^{n}, \ i=1,2,\cdots,3N.
\end{equation}
Introducing the following $N$-dimensional nonisospectral problem based on $\widetilde{A}_{1N}$
\begin{equation}\label{c12}
\begin{cases}
\psi_x=\overline{U}\psi,\ \ \overline{U}=\frac{1}{4}\widetilde{h}_2(1)+\widetilde{h}_3(0)-\sum\limits_{i=1}^{N}u_i\widetilde{h}_{3i-1}(0)=\left[
\begin{matrix}
\overline{U}_1,\overline{U}_2,\cdots,\overline{U}_N
\end{matrix}
\right]^T, \\
\psi_t=\overline{W}\psi,\ \ \overline{W}=\sum\limits_{i=1}^{N}(a_i\widetilde{h}_{3i-2}(0)+b_i\widetilde{h}_{3i-1}(0)+c_i\widetilde{h}_{3i}(0))=\left[
\begin{matrix}
\overline{W}_1,\overline{W}_2,\cdots,\overline{W}_N
\end{matrix}
\right]^T,\\
\lambda_t=\sum\limits_{m\geq0}k_m(t)\lambda^{-m},
\end{cases}
\end{equation}
where
 \begin{equation}\notag
 \begin{split}
 &\overline{U}_1=\left(\begin{matrix}
0&-u_1+\frac{\lambda}{4}\\
1&0
\end{matrix}
\right),\ \ U_l=\left(\begin{matrix}
0&-u_l\\
0&0
\end{matrix}
\right), \ \ l=2,3,\cdots,N, \ \ \overline{W}_k=\left(\begin{matrix}
a_k&b_k\\
c_k&-a_k\end{matrix}
\right),\\
 &a_k=\sum\limits_{m\geq0}a_{km}\lambda^{-m},\ \  b_k=\sum\limits_{m\geq0}b_{km}\lambda^{-m}, \ \ c_k=\sum\limits_{m\geq0}c_{km}\lambda^{-m},\ \  \ k=1,2,\cdots,N.
  \end{split}
 \end{equation}
The   stationary zero-curvature equation
\begin{equation}\label{c13}
\overline{W}_x=\frac{\partial \overline{U}}{\partial \lambda}\lambda_t+[\overline{U},\overline{W}],
\end{equation}
gives rise to the  recursion equations:
\begin{equation}\label{c14}
\begin{cases}
a_{kl,x}=\frac{1}{4}c_{k,l+1}-b_{k,l}-\sum\limits_{i+j=k+1\atop1\leq i,j\leq k}c_{il}u_j-\sigma\varepsilon\sum\limits_{m+n=k+N+1\atop k+1\leq m,n\leq N} c_{ml}u_n,\\
b_{kl,x}=\frac{1}{4}k_l(t)-\frac{1}{2}a_{k,l+1}+2\sum\limits_{i+j=k+1\atop1\leq i,j\leq k}a_{il}u_j+2\sigma\varepsilon\sum\limits_{m+n=k+N+1\atop k+1\leq m,n\leq N} a_{ml}u_n,\\
c_{kl,x}=2a_{kl},\ \ \ l\geq0, \ \  \ k=1,2,\cdots,N,
\end{cases}
\end{equation}
which are  equivalent to
\begin{equation}\label{c15}
\begin{cases}
a_{kl}=\frac{1}{2}c_{kl,x},\ \ \ l\geq0, \ \  \ k=1,2,\cdots,N,\\
b_{kl}=-\frac{1}{4}c_{k,l+1}+\partial^{-1}(\sum\limits_{i+j=k+1\atop1\leq i,j\leq k}c_{il,x}u_j+\sigma\varepsilon\sum\limits_{m+n=k+N+1\atop k+1\leq m,n\leq N} c_{ml,x}u_n)+\frac{1}{4}k_l(t)x,\\
c_{k,l+1}=c_{kl,xx}+2\sum\limits_{i+j=k+1\atop1\leq i,j\leq k}c_{il}u_j+2\sigma\varepsilon\sum\limits_{m+n=k+N+1\atop k+1\leq m,n\leq N} c_{ml}u_n\\
\ \ \ \ \ \ \ \ \ \ \ +2\partial^{-1}(\sum\limits_{i+j=k+1\atop1\leq i,j\leq k}c_{il,x}u_j+\sigma\varepsilon\sum\limits_{m+n=k+N+1\atop k+1\leq m,n\leq N} c_{ml,x}u_n)+\frac{1}{2}k_l(t)x.
\end{cases}
\end{equation}
To the  recursion equations \eqref{c15}, we take the initial values $ a_{k0}=0,$
it follows that one has
\begin{equation}\label{c15*}
\begin{split}
a_{k0}=&0,\ \ c_{k0}=\beta_1,\ \ c_{k1}=2\beta_1\sum\limits_{j=1}^ku_j+2\beta_1\sigma\varepsilon\sum\limits_{n=k+1}^Nu_n+\frac{1}{2}k_0(t)x,\\
b_{k0}=&-\frac{\beta_1}{2}\sum\limits_{j=1}^ku_j-\frac{\beta_1}{2}\sigma\varepsilon\sum\limits_{n=k+1}^Nu_n+\frac{1}{8}k_0(t)x,\ \ a_{k1}=\beta_1\sum\limits_{j=1}^ku_{j,x}+\beta_1\sigma\varepsilon\sum\limits_{n=k+1}^Nu_{n,x}+\frac{1}{4}k_0(t),\\
c_{k2}=&2\beta_1u_{k,xx}+6\beta_1(\sum\limits_{i+j=k+1\atop1\leq i,j\leq k}u_iu_j+\sigma\varepsilon\sum\limits_{m+n=k+N+1\atop k+1\leq m,n\leq N} u_mu_n)\\
&+k_0(t)(x+\partial^{-1})(\sum\limits_{j=1}^ku_j+\sigma\varepsilon\sum\limits_{n=k+1}^Nu_n)+\frac{1}{2}k_1(t)x,
\\
&\cdots
\end{split}
\end{equation}

Denoting that
\begin{equation}\notag
\begin{split}
&\overline{W}^{(n)}=\overline{W}\lambda^n=\sum\limits_{i=1}^{N}(a_i\widetilde{h}_{3i-2}(n)+b_i\widetilde{h}_{3i-1}(n)+c_i\widetilde{h}_{3i}(n))
=\overline{W}_+^{(n)}+\overline{W}_-^{(n)}, \\
&\overline{W}_+^{(n)}=\sum\limits_{i=1}^{N}\sum\limits_{m=0}^{n}(a_{im}\widetilde{h}_{3i-2}(0)+b_{im}\widetilde{h}_{3i-1}(0)+c_{im}\widetilde{h}_{3i}(0))\lambda^{n-m}.
\end{split}
\end{equation}
Taking the modified term  $\triangle_n=-\frac{1}{4}\sum\limits_{i=1}^{N}c_{i,n+1}\widetilde{h}_{3i-1}(0)$ so that for $\overline{V}^{(n)}=\overline{W}_+^{(n)}+\triangle_n$.
It follows that the nonisospectral zero curvature equation
\begin{equation}\label{c18}
\frac{\partial \overline{U}}{\partial u}u_t+\frac{\partial \overline{U}}{\partial \lambda}\lambda_{t}^{(n)}-\overline{V}_x^{(n)}+[\overline{U},\overline{V}^{(n)}]=0,
\end{equation}
gives rise to the  multi-component nonisospectral KdV hierarchy  as follows:
\begin{equation}\label{c19}
\begin{split}
u_{t_n}&=\overline{K}_n=\frac{1}{2}\partial\left(\begin{matrix}
c_{1,n+1}\\
c_{2,n+1}\\
\vdots\\
c_{N,n+1}
\end{matrix}
\right)=J\left(\begin{matrix}
c_{1,n+1}\\
c_{2,n+1}\\
\vdots\\
c_{N,n+1}
\end{matrix}
\right)
=J[\widetilde{L}\left(\begin{matrix}
c_{1,n}\\
c_{2,n}\\
\vdots\\
c_{N,n}
\end{matrix}
\right)+\frac{1}{2}k_n(t)\left(\begin{matrix}
x\\
x\\
\vdots\\
x
\end{matrix}
\right)]\\
&=:\widetilde{M}\left(\begin{matrix}
c_{1,n}\\
c_{2,n}\\
\vdots\\
c_{N,n}
\end{matrix}
\right)+\frac{1}{4}k_n(t)\left(\begin{matrix}
1\\
1\\
\vdots\\
1
\end{matrix}
\right)=\widetilde{\Phi}^nJ\left(\begin{matrix}
c_{1,1}\\
c_{2,1}\\
\vdots\\
c_{N,1}
\end{matrix}
\right)+\frac{1}{2}\sum\limits_{m=1}^{n}k_{m}(t)\widetilde{\Phi}^{n-m}\left(\begin{matrix}
\frac{1}{2}\\
\frac{1}{2}\\
\vdots\\
\frac{1}{2}
\end{matrix}
\right),
\end{split}
\end{equation}
where $J$ is given by \eqref{25} and $\widetilde{L}$  is determined by the  recursion equations \eqref{c15}
\begin{equation}\notag
\begin{split}
& \widetilde{L}=\left[
\begin{matrix}
\widetilde{L}_1,\widetilde{L}_2,\cdots,\widetilde{L}_N,
\end{matrix}
\right]^T,\  \ \widetilde{L}_1=\partial^2+4u_1-2\partial^{-1}u_{1x},\ \ \widetilde{L}_m=4u_m-2\partial^{-1}u_{mx},\ \ m=2,\cdots,N,\\
&\widetilde{M}=J\widetilde{L}=\frac{1}{2}\left[
\begin{matrix}
\widetilde{M}_1,\widetilde{M}_2,\cdots,\widetilde{M}_N,
\end{matrix}
\right]^T,\  \ \widetilde{M}_1=\partial^3+2u_1\partial+2\partial u_{1},\ \ \widetilde{M}_m=2u_m\partial+2\partial u_{m},\ \ m=2,\cdots,N,\\
&\widetilde{\Phi}=J\widetilde{L}J^{-1}=\left[
\begin{matrix}
\widetilde{\Phi}_1,\widetilde{\Phi}_2,\cdots,\widetilde{\Phi}_N,
\end{matrix}
\right]^T,\  \ \widetilde{\Phi}_1=\partial^2+2u_{1,x}\partial^{-1}+4u_{1},\ \ \widetilde{\Phi}_m=2u_{m,x}\partial^{-1}+4u_{m},\ \ m=2,\cdots,N.
\end{split}
\end{equation}
The first two   examples in the above  hierarchy of soliton equations are
\begin{equation}\label{c21}
\begin{split}
u_{t_0}=\left(\begin{matrix}
u_1\\
u_2\\
\vdots\\
u_N
\end{matrix}
\right)_{t_0}=\frac{1}{2}\partial\left(\begin{matrix}
c_{1,1}\\
c_{2,1}\\
\vdots\\
c_{N,1}
\end{matrix}
\right),
\end{split}
\end{equation}
\begin{equation}\notag
u_{k,t_0}=\frac{1}{2}(c_{k,1})_x=\beta_1\sum\limits_{j=1}^ku_{j,x}+\beta_1\sigma\varepsilon\sum\limits_{n=k+1}^Nu_{n,x}+\frac{1}{4}k_0(t),\ \ \ k=1,\cdots,N.
\end{equation}
\begin{equation}\label{c22}
\begin{split}
u_{t_1}=\left(\begin{matrix}
u_1\\
u_2\\
\vdots\\
u_N
\end{matrix}
\right)_{t_1}=\frac{1}{2}\partial\left(\begin{matrix}
c_{1,2}\\
c_{2,2}\\
\vdots\\
c_{N,2}
\end{matrix}
\right),
\end{split}
\end{equation}
\begin{equation}\notag
\begin{split}
u_{k,t_1}=&\frac{1}{2}(c_{k,2})_x=\beta_1u_{k,xxx}+3\beta_1(\sum\limits_{i+j=k+1\atop1\leq i,j\leq k}u_iu_j+\sigma\varepsilon\sum\limits_{m+n=k+N+1\atop k+1\leq m,n\leq N} u_mu_n)_x\\
&+\frac{1}{2}k_0(t)[2(\sum\limits_{j=1}^ku_j+\sigma\varepsilon\sum\limits_{n=k+1}^Nu_n)+x(\sum\limits_{j=1}^ku_{j,x}
+\sigma\varepsilon\sum\limits_{n=k+1}^Nu_{n,x})]+\frac{1}{4}k_1(t),\ \ k=1,\cdots,N,
\end{split}
\end{equation}
which is a multi-component nonisospectral KdV equation.\\
 When $N=1,2$, the system \eqref{c22} is reduced to the nonisospectral KdV equation \eqref{22} and  the coupled nonisospectral KdV equation \eqref{b22} respectively.

\section{Conclusions and discussions}

The higher-dimensional Lie algebras \eqref{3}-\eqref{11} were constructed.  As the applications, we considered the nonisospectral problems \eqref{12}, \eqref{b12}, \eqref{c12} respectively, and thus deduced the  nonisospectral KdV hierarchy \eqref{19}, the  coupled nonisospectral KdV hierarchy  \eqref{b19} and
the  multi-component nonisospectral KdV hierarchy \eqref{c19}. By reducing these hierarchies, we obtained the famous KdV equation \eqref{22}, coupled nonisospectral KdV equation \eqref{b22} and multi-component nonisospectral KdV equation \eqref{c22} respectively. It follows that the  bi-Hamiltonian structures of these resulting hierarchies  were derived by means of the Tu scheme and thus showing their Liouville integrability. Additionally, we found that the $K$ symmetries and $\tau$ symmetries of the coupled nonisospectral KdV hierarchy constitute a set of infinite dimensional Lie algebras.

This paper only considers the KdV space spectral problem.  In fact, the idea and the method used here are universal for other isospectral and nonisospectral problems. The Riemann-Hilbert method  for multi-component systems based on higher-order matrix spectral problems  had been discussed (see \cite{215,216,217}). It is quite intriguing for us  to  consider the application of the Riemann-Hilbert approach to
the multi-component KdV systems. Also, the application of $\overline{\partial}$-dressing method  for the construction of solutions to the  multi-component KdV systems (see \cite{218}), which is worthy of further work.

{\bf Declarations}

{\bf Ethics approval and consent to participate} All authors approve ethics and consent to participate.

{\bf Consent for publication} All authors consent for publication.

{\bf Availability of data and materials} In this paper, we have no data and materials used.

{\bf Competing interests} The authors declare that they have no conflicts of interest.

{\bf Funding} This work was supported by the Natural Science Foundation of Fujian Province of China (grant No.2024J01724) and the National Natural Science Foundation of China (grant No.12371256).

{\bf Authors' contributions} HW: Formal analysis, Writing-original draft, Writing-review $\&$ editing. YZ: Investigation, Supervision, Funding
acquisition. BF: Project administration, Software, Data
curation. All authors read and approved
the fnal manuscript. All the authors contributed equally to the writing of this paper.

{\bf Acknowledgements} The authors would like to thank the referee for valuable comments and suggestions
on this article.


\end{CJK*}
\end{document}